\declaretheoremstyle[
  bodyfont=\normalfont,
  qed={\(\lrcorner\)}
]{corner-definition}
\declaretheorem{theorem}
\declaretheorem[sibling=theorem]{lemma}
\declaretheorem[sibling=theorem]{corollary}
\declaretheorem[sibling=theorem, style=remark]{remark}
\declaretheorem[style=corner-definition]{definition}
\definecolor{cite-green}{RGB}{0,115,0}
\definecolor{cau-main}{RGB}{155,10,125}
\DeclarePairedDelimiterX\setImpl[1]{\{}{\}}{#1}
\NewDocumentCommand\set{sO{}m>{\TrimSpaces}o}{
  \IfValueTF{#4}{
    \setImpl[#2]{#3 : \IfBooleanTF{#1}{\text{#4}}{#4}}
  }{
    \setImpl[#2]{#3}
  }
}
\DeclarePairedDelimiterXPP\bo[1]{\cO}{(}{)}{}{#1}
\DeclarePairedDelimiter\enc{\langle}{\rangle}
\DeclareMathOperator\OPT{OPT}
\DeclareMathOperator\sol{sol}
\DeclareMathOperator\val{val}
\DeclareDocumentCommand\problemfont{}{\scshape}
\DeclareDocumentCommand\DeclareProblem{mm}{\DeclareDocumentCommand{#1}{}{\textup{\problemfont #2}\xspace}}
\DeclareDocumentCommand\setc{soo}{\IfValueT{#2}{\IfValueT{#3}{(}#2\IfValueT{#3}{, #3)}-}{\problemfont\IfBooleanT{#1}{w}SetCover}\xspace}
\DeclareProblem\vertexc{VertexCover}
\DeclareProblem\domset{DominatingSet}
\DeclareDocumentCommand\complexityclassfont{}{\sffamily}
\DeclareDocumentCommand\DeclareComplexityClass{mm}{\DeclareDocumentCommand{#1}{}{\textup{\complexityclassfont #2}\xspace}}
\DeclareComplexityClass\fpt{FPT}
\DeclareComplexityClass\apx{APX}
\DeclareDocumentCommand\email{m}{\href{mailto:#1}{\nolinkurl{#1}}}
\newcommand{\convexpath}[2]{
[
    create hullnodes/.code={
        \global\edef\namelist{#1}
        \foreach [count=\counter] \nodename in \namelist {
            \global\edef\numberofnodes{\counter}
            \node at (\nodename) [draw=none,name=hullnode\counter] {};
        }
        \node at (hullnode\numberofnodes) [name=hullnode0,draw=none] {};
        \pgfmathtruncatemacro\lastnumber{\numberofnodes+1}
        \node at (hullnode1) [name=hullnode\lastnumber,draw=none] {};
    },
    create hullnodes
]
($(hullnode1)!#2!-90:(hullnode0)$)
\foreach [
    evaluate=\currentnode as \previousnode using \currentnode-1,
    evaluate=\currentnode as \nextnode using \currentnode+1
    ] \currentnode in {1,...,\numberofnodes} {
-- ($(hullnode\currentnode)!#2!-90:(hullnode\previousnode)$)
  let \p1 = ($(hullnode\currentnode)!#2!-90:(hullnode\previousnode) - (hullnode\currentnode)$),
    \n1 = {atan2(\y1,\x1)},
    \p2 = ($(hullnode\currentnode)!#2!90:(hullnode\nextnode) - (hullnode\currentnode)$),
    \n2 = {atan2(\y2,\x2)},
    \n{delta} = {-Mod(\n1-\n2,360)}
  in
    {arc [start angle=\n1, delta angle=\n{delta}, radius=#2]}
}
-- cycle
}
\title{Hardness of SetCover Reoptimization}
\author{Klaus Jansen\thanks{Funded by the Deutsche Forschungsgemeinschaft (DFG, German Research Foundation) -- Project number 528381760}\\Kiel University\\\email{kj@informatik.uni-kiel.de}, \orcidlink{0000-0001-8358-6796} \and Tobias Mömke\\University of Augsburg\\\email{moemke@informatik.uni-augsburg.de}, \orcidlink{0000-0002-2509-6972}\and Björn Schumacher\footnotemark[1]\\Kiel University\\\email{bsch@informatik.uni-kiel.de},  \orcidlink{0009-0002-2448-4794}}
\begin{document}

\maketitle

\section*{Abstract}
We study hardness of reoptimization of the fundamental and hard to approximate SetCover problem.
Reoptimization considers an instance together with a solution and a modified instance where the goal is to approximate the modified instance while utilizing the information gained by solution to the related instance.
We study four different types of reoptimization for (weighted) SetCover: adding a set, removing a set, adding an element to the universe, and removing an element from the universe.
A few of these cases are known to be easier to approximate than the classic SetCover problem.
We show that all the other cases are essentially as hard to approximate as SetCover.

The reoptimization problem of adding and removing an element in the unweighted is known to admit a PTAS.
For these settings we show that there is no EPTAS under common hardness assumptions via a novel combination of the classic way to show that a reoptimization problem is \NP-hard and the relation between EPTAS and \fpt.

\section{Introduction}
We consider the \setc Problem according to the following definition.
\begin{definition}[\setc*]\label{def:setc}\leavevmode
  Given a universe \(U\), a collection of sets \(\mathcal{S} \subseteq \cP(U)\) such that \(\bigcup\cS = U\), and a weight function \(w : \cS \to \bQ_{\geq 0}\), find a set \(S \subseteq \cS\) covering the universe \(U\) with minimum weight, i.e.\ that minimizes \(\sum_{s\in S}w(s)\).
  The unweighted version of the problem (\setc) is defined in the same way but \(w\) is the constant one-function, i.e.\ the objective is to find a cover of \(U\) with  as few sets from \(\cS\) as possible.
\end{definition}
The \setc problem is a fundamental problem and part of Karp's 21 \NP-complete problems.
A simple algorithm (greedily choosing the set that covers the most uncovered elements) achieves an approximation ratio of \(\bo{\ln\abs{U}}\) \cite{DBLP:journals/jcss/Johnson74a,DBLP:journals/dm/Lovasz75} and can be extended to the weighted case \cite{DBLP:journals/mor/Chvatal79}.
The exact approximation ratio for the greedy algorithm was analyzed in \cite{DBLP:journals/jal/Slavik97}.
There are several inapproximability results for \setc showing that the greedy algorithm is essentially optimal \cite{DBLP:journals/jacm/LundY94,DBLP:journals/jacm/Feige98,DBLP:conf/stoc/RazS97,DBLP:journals/talg/AlonMS06,DBLP:conf/stoc/DinurS14}.
The best known result that also has the weakest assumption is due to \textcite{DBLP:conf/stoc/DinurS14} and states that there is polynomial approximation algorithm for \setc with approximation ratio \((1 - \epsilon)\ln\abs{U}\) for any \(\epsilon > 0\) unless \(\PP = \NP\).
Regarding parameterized complexity, \setc is  known to be \(W[2]\)-complete \cite{DBLP:journals/siamcomp/DowneyF95} when parameterized by the number of sets in the solution but is fixed-parameter tractable when parameterized by \(\abs{U}\) or \(\abs{\cS}\).

As \setc is not just a hard problem but also hard to approximate it is interesting to consider what additional information helps to obtain better approximation ratios.
The extra information provided by reoptimization is a related instance and a solution with a certain quality.
The instance to solve is obtained by applying a specified modification to the related instance.
Typical modifications include by adding or removing certain parts of the input (vertices, edges, items) or changing weights or profits.

Reoptimization of (\textsc{w})\setc was considered before in \cite{DBLP:conf/waoa/BiloWZ08,DBLP:phd/basesearch/Zych12,mikhailyuk2010reoptimization}.
\textcite{DBLP:conf/waoa/BiloWZ08} considered the following reoptimization settings: adding a constant number of new elements to \(U\) and arbitrarily to sets in \(\cS\), removing a constant number of elements from sets in \(\cS\), complete removal of a constant number of elements, i.e.\ from \(U\) and all sets in \(\cS\), and adding an element from \(U\) to a set of \(\cS\).
They provide a general framework for reoptimization and show how it applies to these settings to obtain approximation algorithms.
For the settings which do not change \(U\) they provide approximation lower bounds based on edge addition and removal for \domset which utilizes to close connection of the two problems.
\textcite{mikhailyuk2010reoptimization} showed that there is an \((2 - \frac{1}{\ln\abs{\cS} + 1})\)-approximation algorithm for the reoptimization setting when up to \(\abs{\cS}\) elements from \(U\) are added to or removed from a set in \(\cS\).

We study four different types of reoptimization for \setc: adding a set (\(S^+\)), removing a set (\(S^{-}\)), adding an element to the universe (\(e^{+}\)), and removing an element from the universe (\(e^-\)).
The precise definitions of these modification are given in \Cref{sec:prelims}.
The settings \(e^+\) and \(e^-\) overlap with the settings studied in \cite{DBLP:conf/waoa/BiloWZ08,DBLP:phd/basesearch/Zych12} but there are no approximation lower bounds given for these settings.

\subsection{Related work}\label{sec:related-work}
In general reoptimization problem of \NP-hard problems is \NP-hard as described in \cite{DBLP:conf/sofsem/BockenhauerHMW08}.
The structure of the proof is roughly as follows:
For every instance, start from some trivial (polynomial solvable) instance and applying the local modification (and the reoptimization algorithm) until we obtain the desired instance (with a solution).
In the same way strongly \NP-hard problems remain strongly \NP-hard.
This rules out FPTAS in many cases and the best to hope for is an EPTAS.

It is a well-known fact that a reoptimization problem where the optimum changes only by a constant admit a PTAS in many cases due to a simple case distinction:
Either optimum is small and we can find a optimal solution (for example through enumeration) in polynomial time or the optimum is large and the given optimal solution can easily be turned into a solution for the modified instance which is good enough.
See, for example, Lemma 2 in \cite{DBLP:phd/basesearch/Zych12} or \cite{DBLP:conf/waoa/BiloWZ08} for a general description of this result.
If the considered problem is \fpt\ parameterized by the size of the solution the first case can be speed up to obtain an EPTAS.
For example, this is the case for the \apx-complete problem \vertexc.
Or the special case of \setc where the pairwise intersections of sets in \(\cS\) is bounded by a constant \(\Delta\) \cite{DBLP:journals/algorithmica/RamanS08} or the special case when the size of all sets is bounded by a constant.
Interestingly, the problem remains hard to approximate even in this case \cite{10.1145/380752.380839}, i.e.\ it is hard to approximate within a factor of \(\ln\Delta - \bo{\ln\ln\Delta}\) unless \(\PP=\NP\).

\subsection{Results}
A high-level overview of known and our results can be found in \Cref{fig:results}.
Hard meaning that the problem cannot be approximately significantly better than \setc (at most by a constant factor).
The exact results for \setc are given in \Cref{prop:setc:add-set:inapprox,prop:setc:rm-set:inapprox} and \Cref{prop:setc:adding:noeptas,prop:setc:removing:noeptas}.
And the results for \setc* are given in \Cref{prop:wsetc:add-set:inapprox,prop:wsetc:rm-set:inapprox}, \Cref{prop:wsetc:adding:inapprox}, and \Cref{prop:wsetc:removing:inapprox}.
\begin{table}[tbh]
  \caption{Overview of results.  Hard means effectively the same inapproximability results as \setc.  Gray results were known before.}
  \vspace{.25cm}
  \centering
  \begin{tabular}{rcccc}\toprule
           & \(S^{+}\) & \(S^{-}\) & \(e^{+}\)       & \(e^{-}\)     \\\midrule
    \setc  & hard      & hard      & \textcolor{gray}{PTAS}/no EPTAS   & \textcolor{gray}{PTAS}/no EPTAS \\
    \setc* & hard      & hard      & \textcolor{gray}{in \apx}/no PTAS & hard          \\\bottomrule
  \end{tabular}
  \label{fig:results}
\end{table}

Most of the hardness results follow the same pattern: For a given \setc instance we construct a reoptimization instance with a simple optimal solution such that the reoptimization algorithm has to effectively solve the original \setc instance.
The proofs to rule out an EPTAS are more involved.
Similar to the approach described above to show \NP-hardness of reoptimization problems, we start with an easy instance and use a presumed EPTAS at every step of the way to construct an \fpt algorithm parameterized by the size of the solution.
This similar to the classic result that an EPTAS for an optimization problem implies an \fpt algorithm parameterized by the size of the solution (cf.\ Theorem 1.32 in \cite{DBLP:series/txtcs/FlumG06}).
We show a general statement for this result in \Cref{prop:eptas-implies-fpt}.

\paragraph{Outline of the paper}
We begin by introducing notation and define our reoptimization settings in \Cref{sec:prelims}.
Then, we discuss our results for the unweighted and weighted case in \Cref{sec:setc,sec:wsetc} respectively.
We conclude by discussing out interesting questions for future work in \Cref{sec:conclusion}.

\section{Preliminaries}\label{sec:prelims}
For any \(n\in\bN_0\) let \([n] \coloneqq \set{1, 2, \ldots, n}\) and \([n]_0 \coloneqq [n]\cup\set{0}\).
With \(\enc{\cdot}\) we denote the encoding length.

For an instance \(I\in\cI\) of an optimization problem \(\cI\) we denote the set of valid solution by \(\sol(I)\) and the value of a solution \(S\in\sol(I)\) by \(\val(S)\).
The optimal value of a solution for an instance \(I\) is denoted by \(\OPT(I)\).

We define reoptimization problems in this paper as follows.
Let \(\cI\) be an optimization problem with a minimization objective.
The definition for a maximization objective is analogous.
Let \(\cM \subseteq \cI\times\cI\) be a relation on the set of instance (we write \(I\sim_{\cM}I'\) if \((I, I')\in\cM\)), which formalizes the valid modifications.
We write (\(\cM, \rho\))-\(\cI\) where \(\cI\) is the problem, \(\cM\) the modification, and \(\rho\) is the quality of the given solution.
An instance of this problem is a triple \((I, S, I')\) with \((I, I')\in\cM\), \(S\in\sol(I)\) and \(\val(S) \le \rho\OPT(I)\).
If an optimal solution is given we use \(\cM\)-\(\cI\) as a shorthand for (\(\cM, 1\))-\(\cI\).

Next, we precisely define the modifications we study for \setc*.
The definitions for \setc are the same, only with the restriction of the weight function to the constant one-function.
Let \((U, \cS, w)\) be a \setc* instance.
\begin{description}
\item[Adding a set (\(S^{+}\))] Given \(s'\in\cP(U)\setminus\cS\) and a value \(w_{s'}\in\bQ_{\geq 0}\) the modified instance is \((U,\cS'\coloneqq \cS\cup\set{s'},w')\) where
  \[
    w' : \cS \to \bQ_{\geq 0}, s \mapsto
    \begin{cases*}
      w_{s'} & if \(s = s'\)\\
      w(s)   & otherwise.
    \end{cases*}
  \]
\item[Removing a set (\(S^{-}\))] Given \(s'\in\cS\) the modified instance is \((U, \cS'\coloneqq \cS\setminus\set{s'}, w_{|\cS'})\).
\item[Adding an element (\(e^{+}\))] Given \(e\notin U\) and \(S_e\subseteq\cS\) the modified instance is \((U\cup\set{e}, \cS'\coloneqq (\cS\setminus S_e)\cup\set{s\cup\set{e}}[s\in S_e], w')\) where \(w' : \cS' \to \bQ_{\geq 0}, s \mapsto w(s \setminus \set{e})\).
\item[Removing an element (\(e^{-}\))] Given \(e\in U\) the modified instance is \(U\setminus\set{e}, \cS'\coloneqq \set{s\setminus\set{e}}[s\in\cS], w')\) where
  \[
    w' : \cS' \to \bQ_{\geq 0}, s \mapsto
    \begin{cases*}
      w(s) & if \(s\in\cS\)\\
      w(s\cup\set{e}) & otherwise.
    \end{cases*}
  \]
\end{description}
Note that since we require in \Cref{def:setc} that every instance of \setc is feasible that both the initial and the modified instance have to feasible, i.e.\ \(\bigcup \cS = U\).
This is not a restriction as we can test \(\bigcup \cS = U\) in polynomial time.

\section{Unweighted Set Cover}\label{sec:setc}
Changes that only change the optimum by a constant in the unweighted case include adding/removing a set with constant size and adding/removing a constant number of elements.
Thus, all these cases admit a PTAS as described in \Cref{sec:related-work}.
\subsection{\setc[\(S^{+}\)]}
We write our inapproximability results as a function of the universe such that they are compatible with the inapproximability results for \setc.
Typically, \(f\) will be slight variation of \(x \mapsto \ln x\), e.g.\ \(x\mapsto \frac{1}{3}\ln x\).
Note, that this notation still allows constant factor approximation by using a constant function.

To obtain a \setc instance with an obvious optimal solution we will duplicate every element of the universe and for every element of the universe add a set to cover it and its duplicate.
Thus, all the newly added sets form an optimal solution and adding a new set covering the duplicates exposes the original \setc instance.

We always build our reoptimization instances in such a way that they have obvious optimal solutions and thus showing our claims for \(\rho = 1\).
But since an optimal solution is also an \(\rho\)-approximate solution for any \(\rho > 1\) we show the claim for every \(\rho \geq 1\).
\begin{lemma}\label{prop:setc:add-set:approx}
  Let \(\rho \geq 1\).
  Let \(f : \bN \to \bQ_{\geq 1}\).
  An \(f(\abs{U})\)-approximation algorithm for \setc[\(S^{+}\)][\(\rho\)] implies an \(2f(2\abs{U})\)-approximation algorithm for \setc.
\end{lemma}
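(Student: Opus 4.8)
The plan is to carry out the duplication construction sketched above. Given a \setc instance \(I = (U,\cS)\), I would introduce a disjoint copy \(\bar U = \set{\bar e}[e\in U]\) of the universe and set \(U' \coloneqq U\cup\bar U\), so that \(\abs{U'} = 2\abs{U}\). Then let \(\cS'\coloneqq \cS\cup\set{\set{e,\bar e}}[e\in U]\); that is, keep the original sets unchanged (so they contain no copies) and add one pair set \(p_e\coloneqq\set{e,\bar e}\) for every \(e\in U\). The pair sets already cover \(U'\), so \(I'\coloneqq(U',\cS')\) is feasible; moreover \(\bar e\) is covered by no set of \(\cS\) and by no pair set other than \(p_e\), so every cover of \(I'\) must use all \(\abs{U}\) pair sets. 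Hence \(P\coloneqq\set{p_e}[e\in U]\) is an optimal solution of \(I'\), and in particular a \(\rho\)-approximate one for every \(\rho\ge 1\). Finally, let the modification add the set \(s'\coloneqq\bar U\) (which is not in \(\cS'\), since every set of \(\cS'\) either is a subset of \(U\) or contains an element of \(U\)); the modified instance \(I''\coloneqq(U',\cS'\cup\set{s'})\) is again feasible, and \((I',P,I'')\) is a valid instance of \setc[\(S^{+}\)][\(\rho\)] computable from \(I\) in polynomial time. (The degenerate case \(U=\emptyset\) is trivial and may be excluded.)

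Next I would bound \(\OPT(I'')\) from above: given any cover \(C_0\) of \(I\), the collection \(C_0\cup\set{s'}\) covers \(U' = U\cup\bar U\), so \(\OPT(I'')\le\OPT(I)+1\le 2\OPT(I)\), using \(\OPT(I)\ge 1\). No lower bound on \(\OPT(I'')\) is needed.

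Then, running the assumed \(f(\abs{U'})\)-approximation algorithm for \setc[\(S^{+}\)][\(\rho\)] on \((I',P,I'')\) yields, in polynomial time, a cover \(C\) of \(I''\) with
\[
  \abs{C}\;\le\; f(\abs{U'})\cdot\OPT(I'')\;\le\; f(2\abs{U})\cdot 2\OPT(I).
\]
It remains to turn \(C\) into a cover of \(I\) of no larger size: delete \(s'\) from \(C\) if present, and replace each pair set \(p_e\in C\) by an arbitrary set of \(\cS\) containing \(e\) (one exists because \(\bigcup\cS=U\)); the remaining members of \(C\) already belong to \(\cS\). Since \(C\) covers \(U\), since \(s'\cap U=\emptyset\), and since the set replacing \(p_e\) still covers \(e\), the resulting collection \(C'\subseteq\cS\) covers \(U\); and \(\abs{C'}\le\abs{C}\) because we removed at most one set and all replacements are one‑to‑one (possibly merging sets). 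Thus \(C'\) is a cover of \(I\) with \(\abs{C'}\le 2f(2\abs{U})\OPT(I)\), i.e.\ a \(2f(2\abs{U})\)-approximation, and the whole procedure runs in polynomial time.

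The construction is quite direct, so I do not expect a serious obstacle. The two points requiring care are, first, verifying that the pair sets really form an optimal (indeed forced) solution of \(I'\) — which relies crucially on the original sets not being extended to the copies — and, second, the back‑translation of a cover of \(I''\) to a cover of \(I\), where the factor \(2\) must absorb both the doubling of the universe inside the argument of \(f\) and the single extra set \(s'\) that the modified optimum may need.
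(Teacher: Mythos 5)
Your proposal is correct and follows essentially the same route as the paper: duplicate the universe, add the pair sets \(\set{e,\bar e}\) as the forced optimal solution, and add \(\bar U\) as the modification, absorbing the \(+1\) in \(\OPT(I'')\le\OPT(I)+1\) into the factor \(2\). Your explicit back-translation of the returned cover (dropping \(s'\) and replacing any pair set by a set of \(\cS\) containing \(e\)) is in fact a slightly cleaner handling of a step the paper treats by a ``no solution contains pair sets'' argument, and it lets you avoid the paper's extra assumption that some set has size at least two.
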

\begin{proof}
  Let \((U, \cS)\) be an \setc instance.
  W.o.l.g.\ assume \(\abs{U} \geq 1\) and \(\OPT((U, \cS)) < \abs{U}\), i.e.\ there exists a set in \(\cS\) with size at least 2.
  % We also assume \(\OPT(U, \cS) > \alpha\) since we can find a set cover with size at most \(\ceil{\alpha}\)\todo{alpha not constant, we can probably write it as a \(2\alpha\) approximation without this} in time \(\abs{U}\abs{\cS}^{\ceil{\alpha}}\).
  Obtain \(U'\) by adding a copy of every element such that \(\abs{U'} = 2\abs{U}\).
  For every \(u\in U\) let \(u'\) be the copy added to \(U'\).
  Next, define
  \[
    \cS' \coloneqq \cS \cup \set{\set{u, u'}}[u\in U].
  \]
  The optimal solution to this instance of \setc is \(S^{*} \coloneqq \cS' \setminus \cS = \set{\set{u, u'}}[u\in U]\).
  The construction is depicted in \Cref{fig:example:add-set:approx}.
  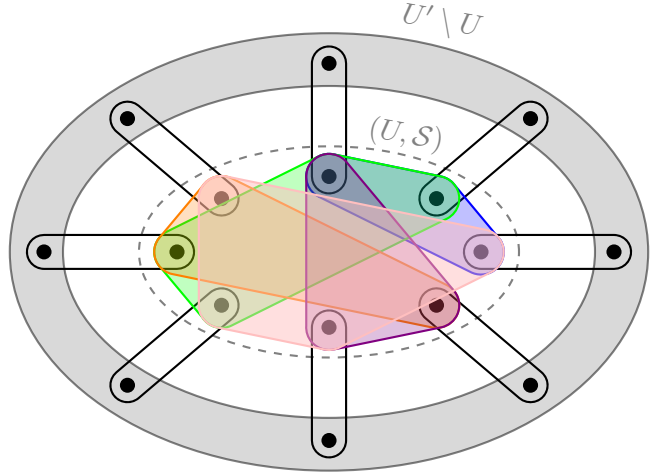
\begin{figure}[tbh]
    \centering
    \begin{tikzpicture}
      \foreach \r in {0,45,...,315} {
        \coordinate (c\r) at (\r:2 and 1);
        \coordinate (cc\r) at (\r:3.75 and 2.5);

        \draw[double, double distance=12pt, thick, line cap=round] (c\r) -- (cc\r);

        \node[fill, circle, inner sep=2pt] (\r) at (c\r) {};
        \node[fill, circle, inner sep=2pt] (\r') at (cc\r) {};
      }
      \draw[dashed, thick, opacity=.5] ellipse (2.5 and 1.4);
      \node[above, opacity=.5, rotate=-10] at (70:2.8 and 1.3) {\((U, \cS)\)};

      \fill[blue, opacity=.25] \convexpath{90,45,0}{.3cm};
      \draw[blue, thick] \convexpath{90,45,0}{.3cm};
      \fill[green, opacity=.25] \convexpath{225,180,90,45}{.3cm};
      \draw[green, thick] \convexpath{225,180,90,45}{.3cm};
      \fill[orange, opacity=.25] \convexpath{315,180,135}{.3cm};
      \draw[orange, thick] \convexpath{315,180,135}{.3cm};
      \fill[violet, opacity=.25] \convexpath{315,270,90}{.3cm};
      \draw[violet, thick] \convexpath{315,270,90}{.3cm};
      \fill[pink, opacity=.5] \convexpath{270,225,135,0}{.3cm};
      \draw[pink, thick] \convexpath{270,225,135,0}{.3cm};

      \fill[opacity=.15, even odd rule] ellipse (3.5 and 2.2) ellipse (4.2 and 2.9);
      \draw[thick, opacity=.5] ellipse (3.5 and 2.2);
      \draw[thick, opacity=.5] ellipse (4.2 and 2.9);

      \node[above, opacity=.5, rotate=-10] at (70:4.2 and 2.9) {\(U'\setminus U\)};
    \end{tikzpicture}
    \caption{Construction in \Cref{prop:setc:add-set:approx}.  The colorful center is the original instance, the outer dots are the duplicates, and the black sets are the sets that cover every element together with its duplicate.  The gray set is added as the local modification to cover all duplicates.}
    \label{fig:example:add-set:approx}
  \end{figure}

  Now, consider the \setc[\(S^{+}\)][\(\rho\)] instance with instance \((U', \cS')\), (optimal) solution \(S^{*}\), and modified instance \((U', \cS'\cup\set{U'\setminus U})\).
  Assume a solution of the modified instance contains a set of the form \(\set{u, u'}\) then there are effectively two cases:
  Either the solution only contains sets of this form or it contains the set \(U'\setminus U\).
  In both cases we can obtain a solution with smaller or equal value that does not contain any sets of the form \(\set{u, u'}\):
  For the first case notice that all elements from \(U\) can easily be covered by at most \(\abs{U} - 1\) sets and we additionally add the set \(U'\setminus U\).
  For the second notice that \(U'\setminus U\) is already in the solution meaning all elements from \(U'\) are already covered.
  Thus, we can simply exchange every set with the form \(\set{u, u'}\) with a set from \(\cS\) covering \(u\) while not increasing the size of the solution.
  Therefore, we may assume that no solution contains sets of the form \(\set{u, u'}\).
  A direct consequence is that \(\OPT((U', \cS'\cup\set{U'\setminus U})) = \OPT((U, \cS)) + 1\).

  We apply the presumed \(f(\abs{U})\)-approximation to the reoptimization instance and obtain a solution \(S\) such that \(\set{U'\setminus U} \subseteq S \subseteq \cS \cup \set{U'\setminus U}\) as discussed before.
  % From this we can build a solution \(S'\) with \(\abs{S'} \leq \abs{S} + 1\) and \(S'\subseteq \cS \cup \set{U'\setminus U}\):
  % We add \(U'\setminus U\notin S\) we add it increasing the size by at most one.\todo{better?}
  % Then, we can swap sets of the form \(\set{u, u'}\) with a sets from \(\cS\) without increasing the size as the \(u'\) are already covered by \(U'\setminus U\).
  Notice that \(S' \coloneqq S \setminus \set{U'\setminus U}\) is a solution for the original instance.
  We have
  \begin{multline*}
    \abs{S'}
    = \abs{S} - 1
    \leq f(\abs{U'})\OPT((U', \cS'\cup\set{U'\setminus U})) - 1\\
    = f(2\abs{U})(\OPT((U, \cS)) + 1) - 1
    \leq 2f(2\abs{U})\OPT((U, \cS))
  \end{multline*}
  proving the claim.
\end{proof}
\begin{remark}
  The proof also shows that an exact algorithm for \setc[\(S^{+}\)][\(\rho\)] implies an exact algorithm for \setc.
  This can be seen by analyzing the last inequality chain when \(f = 1\).
\end{remark}

For every \(\alpha > 0\), \Cref{prop:setc:add-set:approx} shows that an \(\alpha\ln\abs{U}\) approximation for \setc[\(S^+\)][\(\rho\)] implies a \(2\alpha\ln(2\abs{U})\) approximation algorithm for \setc.
This approximation ratio is better than \((1 - 2\alpha)\ln\abs{U}\) for \(\alpha < \frac{1}{2}\) and \(\abs{U} > \exp(\frac{2\alpha\ln 2}{(1-2\alpha)^2})\).
Combining this with the \cite{DBLP:conf/stoc/DinurS14} result, we obtain the following corollary.
\begin{corollary}\label{prop:setc:add-set:inapprox}
  For any \(\alpha \in (0, \frac{1}{2})\) and \(\rho \geq 1\) there is no \(\alpha\ln\abs{U}\)-approximation algorithm for \setc[\(S^+\)][\(\rho\)] unless \(\PP = \NP\).
\end{corollary}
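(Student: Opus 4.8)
The plan is to compose \Cref{prop:setc:add-set:approx} with the Dinur--Steurer inapproximability theorem and simply read off the leading coefficients. Suppose, towards a contradiction, that for some $\alpha\in(0,\tfrac12)$ and $\rho\geq1$ there were a polynomial-time $\alpha\ln\abs{U}$-approximation algorithm for \setc[\(S^+\)][\(\rho\)]. I would apply \Cref{prop:setc:add-set:approx} with (a rational surrogate of) $f(x)=\alpha\ln x$: pick a rational $f(x)$ with $\max\{1,\alpha\ln x\}\leq f(x)\leq\alpha\ln x+2$ for every $x\in\bN$, which is possible since this interval is non-empty, so that $f\colon\bN\to\bQ_{\geq1}$ and every $\alpha\ln\abs{U}$-approximation is in particular an $f(\abs{U})$-approximation. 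The lemma then hands back a polynomial-time $2f(2\abs{U})$-approximation algorithm for \setc.

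Next I would just compare growth rates. We have $2f(2\abs{U})\leq 2\alpha\ln(2\abs{U})+4=2\alpha\ln\abs{U}+(2\alpha\ln2+4)$, and since $\alpha<\tfrac12$ the leading coefficient $2\alpha$ is strictly below $1$. Hence, fixing any $\epsilon$ with $0<\epsilon<1-2\alpha$, there is a constant threshold $N_0=N_0(\alpha,\epsilon)$ (one may take $N_0=\exp(\tfrac{2\alpha\ln2+4}{1-\epsilon-2\alpha})$) such that $2f(2\abs{U})\leq(1-\epsilon)\ln\abs{U}$ for all $\abs{U}\geq N_0$; this is essentially the computation already sketched before the statement, with the rounding of $f$ only affecting the value of $N_0$. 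By \textcite{DBLP:conf/stoc/DinurS14} there is no polynomial-time $(1-\epsilon)\ln\abs{U}$-approximation algorithm for \setc unless $\PP=\NP$, and the hard instances produced by that reduction have $\abs{U}$ growing without bound, so they eventually lie above $N_0$; therefore the composed algorithm would resolve those instances in polynomial time, forcing $\PP=\NP$.

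I do not anticipate a genuine difficulty here: the corollary is a ``plug in and compare asymptotics'' consequence of \Cref{prop:setc:add-set:approx}. The two points that need a line of care are (i) rounding $x\mapsto\alpha\ln x$ up to a rational value at least $1$ so that \Cref{prop:setc:add-set:approx} applies verbatim, the resulting slack being absorbed into an additive constant, and (ii) the fact that $2\alpha\ln(2\abs{U})\leq(1-\epsilon)\ln\abs{U}$ only holds once $\abs{U}$ is large, which is harmless because the Dinur--Steurer hard instances are themselves asymptotically large (equivalently, instances with $\abs{U}$ bounded by a constant admit an exact polynomial-time algorithm, by enumerating the at most $\abs{U}$ sets of an optimal cover). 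The restriction $\alpha<\tfrac12$ is exactly what is needed so that doubling the universe in \Cref{prop:setc:add-set:approx} still leaves a leading coefficient below the Dinur--Steurer barrier of $\ln\abs{U}$.
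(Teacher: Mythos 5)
Your argument is correct and is essentially the paper's own: the corollary is obtained by instantiating \Cref{prop:setc:add-set:approx} with \(f(x)=\alpha\ln x\), observing that the resulting ratio \(2\alpha\ln(2\abs{U})\) has leading coefficient \(2\alpha<1\) and hence drops below \((1-\epsilon)\ln\abs{U}\) for all sufficiently large \(\abs{U}\), and then invoking \textcite{DBLP:conf/stoc/DinurS14}. Your extra care about rounding \(f\) into \(\bQ_{\geq 1}\) and about handling the constantly many small universes exactly only makes explicit what the paper leaves implicit.
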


\subsection{\setc[\(S^{-}\)]}
\begin{lemma}\label{prop:setc:rm-set:approx}
  Let \(\rho \geq 1\).
  Let \(f : \bN \to \bQ_{\geq 1}\).
  An \(f(\abs{U})\)-approximation algorithm for \setc[\(S^{-}\)][\(\rho\)] implies an \(f(\abs{U})\)-approximation algorithm for \setc.
\end{lemma}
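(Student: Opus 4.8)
The plan is to mirror the construction behind \Cref{prop:setc:add-set:approx} in an even simpler form, exploiting that the \(S^{-}\) modification does not touch the universe; this is precisely why the bound here is \(f(\abs U)\) rather than \(2f(2\abs U)\). Given a \setc instance \((U, \cS)\), I would first dispose of the degenerate case \(U\in\cS\): then \(\OPT((U,\cS)) = 1\) and we may simply output \(\set U\). So assume \(U\notin\cS\) and put \(\cS'\coloneqq\cS\cup\set U\). The singleton \(S^{*}\coloneqq\set U\) covers the whole universe, hence is an optimal solution for \((U, \cS')\) with \(\val(S^{*}) = 1 = \OPT((U,\cS'))\); in particular it is a \(\rho\)-approximate solution for every \(\rho\ge 1\).

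Next I would observe that applying the modification \(S^{-}\) with removed set \(s'\coloneqq U\in\cS'\) yields exactly \((U,\cS'\setminus\set U) = (U,\cS)\), the original instance, which is feasible since \(\bigcup\cS = U\) by \Cref{def:setc}. Therefore \(\bigl((U,\cS'), S^{*}, (U,\cS)\bigr)\) is a legitimate instance of \setc[\(S^{-}\)][\(\rho\)]. Feeding it to the presumed \(f(\abs U)\)-approximation algorithm returns a cover \(S\) of \(U\) that uses only sets from \(\cS'\setminus\set U = \cS\) and satisfies \(\abs S \le f(\abs U)\cdot\OPT((U,\cS))\), where I use that the modified instance still has universe \(U\). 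Since \(S\) is already a solution for \((U,\cS)\), this is the claimed \(f(\abs U)\)-approximation algorithm for \setc.

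There is essentially no obstacle: the only points needing care are the degenerate case \(U\in\cS\) (handled above — note that without it, removing \(U\) from \(\cS'\) would not recover \(\cS\)) and verifying that the constructed triple meets all definitional requirements, namely that \((U,\cS')\sim_{\cM}(U,\cS)\) holds for the \(S^{-}\) modification, that \(S^{*}\in\sol((U,\cS'))\) has value at most \(\rho\OPT((U,\cS'))\), and that both instances are feasible. As in the remark following \Cref{prop:setc:add-set:approx}, taking \(f\equiv 1\) additionally shows that an exact algorithm for \setc[\(S^{-}\)][\(\rho\)] implies an exact algorithm for \setc.
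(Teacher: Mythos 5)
Your proposal is correct and follows exactly the paper's argument: add the set \(U\) to \(\cS\) (after dismissing the trivial case where a set already covers the whole universe), note that \(\set{U}\) is an optimal solution, and let the \(S^{-}\) modification remove it so that the reoptimization algorithm must solve the original instance over the unchanged universe. Your extra care about feasibility and the legitimacy of the reoptimization triple only makes explicit what the paper leaves implicit.
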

\begin{proof}
  Given any \setc instance.
  W.l.o.g.\ assume that there is no set covering the entire universe
  We add a new set containing the entire universe to the sets.
  The newly added set is an optimal solution.
  Thus, we have constructed an \setc[\(S^{-}\)][\(\rho\)] instance (removing the newly added set) which has to exactly solve the original instance.
  Therefore, we obtain an \(f(\abs{U})\)-approximation for \setc.
\end{proof}
This results implies that the same inapproximability result that hold for \setc also hold for \setc[\(S^{-}\)].
Thus, the known \(\bo{\ln \abs{U}}\)-approximation is the best we can hope for under the assumption \(\PP\ne\NP\).
Formally stated in the following corollary.
\begin{corollary}\label{prop:setc:rm-set:inapprox}
  For any \(\epsilon\) and \(\rho \geq 1\) there is no \((1-\epsilon)\ln\abs{U}\)-approximation algorithm for \setc[\(S^-\)][\(\rho\)] unless \(\PP = \NP\).
\end{corollary}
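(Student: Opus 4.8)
The plan is to derive the corollary directly from \Cref{prop:setc:rm-set:approx} together with the inapproximability result of \textcite{DBLP:conf/stoc/DinurS14} quoted in the introduction. First I would note that the statement is only non-trivial for \(\epsilon > 0\): for \(\epsilon \le 0\) the ratio \((1-\epsilon)\ln\abs{U}\) is already attained in polynomial time by the greedy algorithm. So I would fix an arbitrary \(\epsilon \in (0,1)\) and \(\rho \ge 1\) and argue by contradiction.

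Suppose there were a polynomial-time \((1-\epsilon)\ln\abs{U}\)-approximation algorithm for \setc[\(S^{-}\)][\(\rho\)]. I would then invoke \Cref{prop:setc:rm-set:approx} with a suitable function \(f\); the only mild care needed is that the lemma demands \(f : \bN \to \bQ_{\ge 1}\), so I would take \(f(n)\) to be a rational number sandwiched between \((1-\epsilon)\ln n\) and \((1-\tfrac{\epsilon}{2})\ln n\) for all large \(n\) and equal to \(1\) otherwise; such an \(f\) is polynomial-time computable and dominates \((1-\epsilon)\ln n\), so the presumed algorithm is in particular an \(f(\abs{U})\)-approximation. The lemma then produces a polynomial-time \(f(\abs{U})\)-approximation algorithm for \setc, which in particular is a \((1-\tfrac{\epsilon}{2})\ln\abs{U}\)-approximation.

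By \textcite{DBLP:conf/stoc/DinurS14}, the existence of such an algorithm for \setc forces \(\PP = \NP\), and this gives the claim. The reason the threshold for \setc[\(S^{-}\)] coincides exactly with that of \setc — rather than being weakened by a constant as in \Cref{prop:setc:add-set:inapprox} — is that the reduction underlying \Cref{prop:setc:rm-set:approx} leaves the universe untouched (it only appends the single set \(U\) itself to \(\cS\)), so the approximation guarantee is measured against the same quantity \(\abs{U}\) on both sides, and no loss is incurred. Consequently there is no real obstacle in this step; the argument is essentially bookkeeping, the one point of attention being the replacement of the real-valued map \(x \mapsto (1-\epsilon)\ln x\) by a rational-valued \(f\) meeting the hypotheses of the lemma.
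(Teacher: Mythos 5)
Your proposal is correct and follows the same route as the paper: the corollary is obtained by combining \Cref{prop:setc:rm-set:approx} (which preserves \(\abs{U}\) and hence the ratio exactly) with the inapproximability bound of \textcite{DBLP:conf/stoc/DinurS14}. Your extra care in replacing \(x\mapsto(1-\epsilon)\ln x\) by a rational-valued \(f\ge 1\) is a harmless refinement the paper leaves implicit.
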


\subsection{\setc[\(e^{+}\)]}
The existence of a PTAS for \setc[\(e^{+}\)] was already discussed earlier.
Here we focus on showing that the existence of an EPTAS is unlikely.
First, we show the general framework and apply it afterwards to \setc[\(e^{+}\)] by giving an appropriate construction.

Similar to the way \NP-hardness is shown for reoptimization problems (as described in \Cref{sec:related-work}) we can rule out the existence of an EPTAS under certain conditions.
\Cref{prop:eptas-implies-fpt} shows how an EPTAS for the reoptimization problem can be used to build an built an \fpt algorithm for original optimization problem.
\begin{lemma}[EPTAS implies Parameterized Algorithm]\label{prop:eptas-implies-fpt}
  Let \(\cI\) be an optimization problem with integral solution values and \(\cM\subseteq \cI\times\cI\) a modification.
  Given an EPTAS for the reoptimization problem \(\cM\)-\(\cI\), computable functions \(f, f' : \bN \to \bN\), and for every \(I\in\cI\) we can find \(I_0, I_1, \ldots, I_n\in\cI\) (in \(\poly(\enc{I})\) time) such that for every \(k\in\bN\) we have
  \begin{itemize}
  \item if \(\OPT(I) \leq k\) then \(\OPT(I_i) \leq f(k)\) for all \(i\in[n]_0\),
  \item we can find an optimal solution to \(I_0\) or decide that \(\OPT(I_0) > f(k)\) in time \(f'(k)\poly(\enc{I})\),
  \item \(I_{i-1} \sim_{\cM} I_{i}\) for all \(i\in[n]\), and
  \item \(\OPT(I_n) \leq f(k) \implies \OPT(I) \leq k\)
  \end{itemize}
  then \(\cI\) is fixed-parameter tractable parameterized by solution size.
\end{lemma}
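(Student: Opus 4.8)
The plan is to convert a hypothetical EPTAS for $\cM$-$\cI$ into an algorithm that, given $I\in\cI$ and $k\in\bN$, decides whether $\OPT(I)\le k$ in time $g(k)\cdot\poly(\enc{I})$ for some computable $g$ --- which is precisely fixed-parameter tractability parameterized by solution size. The mechanism is to walk along the chain $I_0,\dots,I_n$ supplied by the hypothesis, carrying an optimal solution forward link by link: from an optimal solution $S_{i-1}$ of $I_{i-1}$ one forms the reoptimization instance $(I_{i-1},S_{i-1},I_i)$ --- legal because $I_{i-1}\sim_{\cM}I_i$ --- and runs the EPTAS on it, but with an accuracy chosen so finely that the returned $(1+\epsilon)$-approximation is in fact exactly optimal.

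In detail, first compute $I_0,\dots,I_n$ in polynomial time (so $n\le\poly(\enc{I})$) and fix the accuracy $\epsilon_k\coloneqq\frac{1}{f(k)+1}$; since $f$ is computable, $1/\epsilon_k$ depends only on $k$. The one elementary fact used repeatedly is: if $J\in\cI$ has $\OPT(J)\le f(k)$ and $T\in\sol(J)$ has $\val(T)\le(1+\epsilon_k)\OPT(J)$, then $\OPT(J)\le\val(T)<\OPT(J)+1$, so by integrality of solution values $\val(T)=\OPT(J)$, i.e.\ $T$ is optimal. The algorithm then proceeds: use the $I_0$-subroutine to obtain an optimal $S_0\in\sol(I_0)$ in time $f'(k)\poly(\enc{I})$, but answer ``$\OPT(I)>k$'' instead if it reports $\OPT(I_0)>f(k)$ or if $\val(S_0)>f(k)$; then for $i=1,\dots,n$ in turn, note that since we have not stopped we have $\val(S_{i-1})\le f(k)$, hence $\OPT(I_{i-1})\le f(k)$ and --- by the elementary fact --- $S_{i-1}$ is an optimal (in particular $1$-approximate) solution of $I_{i-1}$, so $(I_{i-1},S_{i-1},I_i)$ is a valid instance of $\cM$-$\cI$; run the EPTAS with accuracy $\epsilon_k$ to obtain $S_i\in\sol(I_i)$ with $\val(S_i)\le(1+\epsilon_k)\OPT(I_i)$, and answer ``$\OPT(I)>k$'' and stop if $\val(S_i)>f(k)$. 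If the loop finishes, answer ``$\OPT(I)\le k$''.

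For correctness, argue both directions. If $\OPT(I)\le k$, the first hypothesis gives $\OPT(I_i)\le f(k)$ for all $i\in[n]_0$, and induction on $i$ --- using that $S_{i-1}$ is optimal, so the EPTAS is fed a valid instance, together with the elementary fact applied to $I_i$ --- shows every $S_i$ is optimal with $\val(S_i)\le f(k)$; so the algorithm never stops early and correctly answers ``$\le k$''. Conversely, whenever the algorithm answers ``$\le k$'' we have $\val(S_n)\le f(k)$, hence $\OPT(I_n)\le\val(S_n)\le f(k)$, and the last hypothesis gives $\OPT(I)\le k$; thus the ``$\le k$'' answer is always correct, and when $\OPT(I)>k$ the algorithm must halt with ``$>k$''. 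For the running time, every $I_i$ and every reoptimization triple has encoding length $\poly(\enc{I})$; the $I_0$-subroutine costs $f'(k)\poly(\enc{I})$; and each of the $n\le\poly(\enc{I})$ EPTAS calls costs at most $q(1/\epsilon_k)\poly(\enc{I})=q(f(k)+1)\poly(\enc{I})$, where $q$ is the computable dependence of the EPTAS running time on the inverse accuracy. Altogether this is $\bigl(f'(k)+q(f(k)+1)\bigr)\poly(\enc{I})$, as required.

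The hard part --- and the reason the accuracy is chosen as above --- is that $\cM$-$\cI$ is defined only on inputs that carry an optimal solution of the previous instance, whereas an EPTAS only promises a $(1+\epsilon)$-approximation; the resolution is to take $\epsilon=\epsilon_k$ small enough that integrality upgrades ``$(1+\epsilon_k)$-approximate of value $\le f(k)$'' to ``optimal'', which simultaneously keeps $1/\epsilon_k$ a function of $k$ (so the running time stays FPT rather than merely polynomial) and keeps every intermediate instance handed to the EPTAS legal. A secondary subtlety is the branch $\OPT(I)>k$, where the intermediate solutions need not be optimal: there the guard $\val(S_i)>f(k)$ is exactly what both certifies the negative answer and prevents an illegal instance from ever reaching the EPTAS.
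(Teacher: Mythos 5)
Your proposal is correct and follows essentially the same route as the paper: the same accuracy choice \(\epsilon = \frac{1}{f(k)+1}\), the same integrality argument upgrading the \((1+\epsilon)\)-approximation to an exact optimum at each link of the chain, and the same final appeal to \(\OPT(I_n)\le f(k)\implies\OPT(I)\le k\). Your explicit guard \(\val(S_i)>f(k)\) for the branch \(\OPT(I)>k\) is in fact slightly more careful than the paper's proof, which only analyzes the chain under the assumption \(\OPT(I)\le k\).
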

\begin{proof}
  Suppose we have an EPTAS for \(\cM\)-\(\cI\) with running time \(g(\frac{1}{\epsilon})\poly(\enc{I})\).
  We construct an algorithm for \(\cI\) that given an instance \(I\in\cI\) and a \(k\in\bN\) either computes an optimal solution or decides that \(\OPT(I) > k\).
  This implies that \(\cI\) is fixed-parameter tractable parameterized by solution size.

  Let \(k\in\bN\) and \(I\in\cI\) an arbitrary instance.
  First we calculate an optimal solution \(S_0\) to \(I_0\) or decide that \(\OPT(I_0) > f(k)\) in time \(f'(k)\poly(\enc{I})\).
  If \(\OPT(I_0) > f(k)\) we know that \(\OPT(I) > k\) and we are done.

  Let \(\epsilon \coloneqq \frac{1}{f(k) + 1}\).
  Let \(\cA_{\epsilon}(I, S, I')\) be the solution that the EPTAS calculates for the instance \(I'\) given accuracy \(\epsilon\).
  Let \(S_i \coloneqq \cA_{\epsilon}(I_{i - 1}, S_{i - 1}, I_i)\) for all \(i\in[n]\).
  We have
  \[
    \abs{\val(S_i) - \OPT(I_i)} \leq \epsilon\OPT(I_i) = \frac{\OPT(I_i)}{f(k) + 1} < 1
  \]
  for any \(i\in[n]\) assuming \(\OPT(I) \leq k\).
  Thus, the EPTAS calculates an optimal solution in every step.
  Therefore, we can decide \(I\), wether \(\OPT(I) \leq k\) (due to \(\OPT(I_n) \leq f(k) \implies \OPT(I) \leq k\)), in time \(\bo{f''(k)\poly(\enc{I})}\) where
  \[
    f'' : \bN \to \bN, k \mapsto f'(k) + g(f(k) + 1)
  \]
  which is a computable function.
\end{proof}
% Note that the second condition is also fulfilled if there is always a trivial solution to \(I_0\).
Since \Cref{prop:eptas-implies-fpt} shows the existence of a fixed-parameter tractable algorithm for the original problem when the reoptimization problem admits an EPTAS, it allows to conditionally rule out an EPTAS when the problem is \(W[t]\)-hard for some \(t\ge 1\) and the preconditions for \Cref{prop:eptas-implies-fpt} are fulfilled.
We state this formally in the following corollary.
\begin{corollary}\label{prop:w-hard-implies-no-eptas}
  Let \(t\in\bN_{\ge 1}\).  Given \(W[t]\)-hard problem, a reoptimization variant of the problem, and a construction that fulfill the preconditions of \Cref{prop:eptas-implies-fpt}, then there is no EPTAS for the reoptimization problem, unless \(W[t] = \fpt\).
\end{corollary}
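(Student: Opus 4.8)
The plan is to argue by contraposition, and the argument is essentially a one-step invocation of \Cref{prop:eptas-implies-fpt} together with the definition of the $W$-hierarchy. Suppose, for contradiction, that an EPTAS for the given reoptimization problem exists. By hypothesis we are handed a construction that meets all the preconditions of \Cref{prop:eptas-implies-fpt} (computable $f, f'$, the polynomial-time sequence $I_0, \ldots, I_n$ with the four bulleted properties). Feeding the presumed EPTAS and this construction into \Cref{prop:eptas-implies-fpt} yields that the original problem is fixed-parameter tractable parameterized by solution size, i.e.\ $W[t]$-hardness of that problem places $W[t] \subseteq \fpt$. Since $\fpt \subseteq W[t]$ holds unconditionally, we get $W[t] = \fpt$, contradicting the assumption $W[t] \ne \fpt$; hence no EPTAS exists under that assumption.

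The first thing I would do is make the parameterizations align explicitly: \Cref{prop:eptas-implies-fpt} delivers an \fpt\ algorithm \emph{parameterized by solution size}, so the $W[t]$-hardness assumed in the statement must be read with respect to the same parameter. This is purely a matter of bookkeeping, but it is worth stating so the reader sees that the first bullet of \Cref{prop:eptas-implies-fpt} (if $\OPT(I) \le k$ then $\OPT(I_i) \le f(k)$) is precisely what keeps the relevant quantity bounded by a function of $k$ along the whole chain. Second, I would invoke \Cref{prop:eptas-implies-fpt} verbatim — there is nothing new to prove there. Third, I would cite the standard fact from parameterized complexity that a $W[t]$-hard problem lying in \fpt\ collapses the hierarchy at level $t$; this uses parameterized reductions but is entirely off-the-shelf, so I would simply reference it (e.g.\ the textbook of \textcite{DBLP:series/txtcs/FlumG06}).

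The only point requiring a little care — and the step I would be most careful to spell out — is that matching of the two parameterizations, i.e.\ ensuring that ``solution size'' in the original problem is genuinely the parameter witnessing $W[t]$-hardness, and that the construction supplied to \Cref{prop:eptas-implies-fpt} is the one that keeps $\OPT(I_i)$ controlled by $f(k)$. Once that is fixed, the corollary is immediate: it is the contrapositive of ``EPTAS $+$ suitable construction $\Rightarrow$ \fpt,'' combined with ``$W[t]$-hard $\cap$ \fpt\ $\Rightarrow$ $W[t] = \fpt$.'' No calculation is involved.
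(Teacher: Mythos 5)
Your proposal is correct and matches the paper's intended argument exactly: the paper states this corollary without a separate proof, treating it as the immediate contrapositive of \Cref{prop:eptas-implies-fpt} combined with the standard fact that a \(W[t]\)-hard problem in \fpt\ collapses the hierarchy at level \(t\). Your extra care about aligning the parameterization (solution size on both sides) is a reasonable bookkeeping point but introduces nothing beyond what the paper implicitly assumes.
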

\(W[t] \ne \fpt\) for any \(t\geq 1\) is classic assumption in parameterized complexity.
Note that \(W[t] = \fpt\) implies that the ETH fails (cf.\ Theorem 29.4.1 in \cite{DBLP:series/txcs/DowneyF13}).

Now we give the construction for \setc[\(e^{+}\)] to apply \Cref{prop:w-hard-implies-no-eptas}.
\begin{lemma}\label{prop:setc:adding:noeptas}
  There is no EPTAS for \setc[\(e^{+}\)], unless \(\fpt = W[2]\).
\end{lemma}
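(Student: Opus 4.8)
The plan is to apply \Cref{prop:w-hard-implies-no-eptas} with the $W[2]$-hard problem being \setc parameterized by solution size, and the reoptimization variant being \setc[$e^+$]. So the task reduces to exhibiting, for every \setc instance $I = (U, \cS)$, a chain $I_0 \sim_{\cM} I_1 \sim_{\cM} \cdots \sim_{\cM} I_n$ of \setc instances linked by single-element additions that satisfies the four bullet conditions of \Cref{prop:eptas-implies-fpt}.

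The natural construction mirrors the one in \Cref{prop:setc:add-set:approx}: I would start from a trivial instance whose optimum is controlled, and add the elements of $U$ one at a time, each time attaching the new element to exactly the sets of $\cS$ that contain it. Concretely, enumerate $U = \{u_1, \dots, u_n\}$. For $I_0$ take a universe consisting of $n$ dummy elements $d_1, \dots, d_n$ together with sets $\{d_j\}$ for each $j$ — or, better, arrange $I_0$ so that there is a fixed ``backbone'' cover of small size. Then $I_i$ is obtained from $I_{i-1}$ by adding the element $u_i$ (via the $e^+$ modification) and distributing it into the appropriate sets. After all $n$ steps, $I_n$ contains the full universe $U$ plus the dummy elements, with the sets of $\cS$ present (augmented to also contain the dummies they are ``indexed'' by) so that a cover of $I_n$ projects to a cover of $I$ of the same size, up to an additive constant absorbed into $f$. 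One has to be a little careful to keep every $I_i$ feasible (each must satisfy $\bigcup \cS = U_i$), which is why the dummy elements and the backbone sets are there: they guarantee feasibility throughout and give $\OPT(I_i)$ a clean relationship with $\OPT(I)$. Taking $f(k) = k + c$ for the appropriate constant $c$ and $f'(k)$ from brute-force enumeration of all size-$\le f(k)$ subcollections of the (polynomially many) sets of $I_0$, the second bullet is immediate, and the first and fourth bullets follow from the size accounting: adding one element can raise the optimum by at most $1$, and a bounded cover of $I_n$ yields a bounded cover of $I$.

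The main obstacle I expect is the simultaneous bookkeeping of feasibility and the optimum across the whole chain: the $e^+$ modification as defined only \emph{adds} an element and redistributes it into chosen existing sets, so I cannot create new sets mid-chain, and I must ensure that from step $0$ onwards the sets already present are exactly (a superset-encoding of) the sets of $\cS$, while no intermediate universe has an uncoverable element. The cleanest fix is to put \emph{all} the final structure into $I_0$ except for the membership of the real elements $u_1, \dots, u_n$: that is, $I_0$ has universe $\{d_1, \dots, d_n\}$, a copy $s$ of each $s \in \cS$ (where $s$'s copy contains $d_j$ iff $u_j \in s$ in the original — note this makes $I_0$ feasible since $\bigcup\cS = U$), and then each $I_i$ just ``reveals'' $u_i$ by inserting it into precisely the copies of sets of $\cS$ that contain $u_i$. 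Then $\OPT(I_i)$ equals $\OPT$ of the restriction of $I$ to $\{u_1,\dots,u_i\}$, which is at most $\OPT(I) \le k$; and $\OPT(I_n) = \OPT(I)$ exactly, so $f(k) = k$ works and no additive slack is even needed. With this choice all four preconditions are routine, and \Cref{prop:w-hard-implies-no-eptas} together with the $W[2]$-completeness of \setc from \cite{DBLP:journals/siamcomp/DowneyF95} gives the claim.
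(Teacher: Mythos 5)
Your overall strategy (apply \Cref{prop:w-hard-implies-no-eptas} via a chain of \(e^+\) modifications that grows the universe of \(U\) one element at a time) is the right one, and your observation that all of \(\cS\) must already be present in \(I_0\) in restricted form is correct. But your choice of \(I_0\) breaks the argument: the instance with universe \(\set{d_1,\ldots,d_n}\) and, for each \(s\in\cS\), the set \(\set{d_j}[u_j\in s]\) is an isomorphic copy of the original instance \(I\) (and indeed \(\OPT(I_i)=\OPT(I)\) for \emph{every} \(i\), not the optimum of a restriction, since covering the dummies is already equivalent to covering \(U\)). The second precondition of \Cref{prop:eptas-implies-fpt} demands that you find an optimal solution of \(I_0\) or certify \(\OPT(I_0)>f(k)\) in time \(f'(k)\poly(\enc{I})\); your proposed brute force over all subcollections of size at most \(f(k)\) runs in time \(\abs{\cS}^{\bo{f(k)}}\), which is XP, not \fpt. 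Since \setc parameterized by solution size is \(W[2]\)-hard, assuming you can handle your \(I_0\) in \fpt\ time is assuming exactly the conclusion you are trying to derive; the argument is circular.

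The missing idea is to make \(I_0\) \emph{trivially} solvable while still containing a distinct restriction of every set of \(\cS\). The paper does this by tagging each set \(s_i\) with a fresh element \(e_i\), adding one more fresh element \(e_{m+1}\), and adding the single set \(\set{e_1,\ldots,e_{m+1}}\) to the collection. Then \(I_0\) has universe \(\set{e_1,\ldots,e_{m+1}}\), its sets restrict to the singletons \(\set{e_i}\) plus the one big tag set, and \(\OPT(I_0)=1\) with an obvious optimal solution; the tags keep all sets pairwise distinct throughout the chain (a role your dummies also play), and the extra set guarantees feasibility and forces \(\OPT(I_i)\le\OPT(I)+1\) at every step and \(\OPT(I_n)=\OPT(I)+1\) at the end, so \(f(k)=k+1\) works. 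Without some such device that collapses the starting optimum to a constant, the chain does no work and the reduction fails.
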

\begin{proof}
  Let \(I = (U, \cS)\) be an arbitrary \setc instance and let \(m\coloneqq \abs{\cS}\) and \(\cS = \set{s_1, \ldots, s_m}\).
  We take \(m + 1\) fresh elements \(e_1, \ldots, e_{m + 1}\notin U\) and define the new instance \(I' = (U', \cS')\).
  \[
    U' \coloneqq U \sqcup \set{e_1, \ldots, e_{m + 1}} \qquad \cS' \coloneqq \set{s_i \sqcup \set{e_i}}[i\in[m]] \sqcup \set{e_1, \ldots, e_{m + 1}}
  \]
  We have \(\OPT(I) + 1 = \OPT(I')\), as a solution has to always contain the set \(\set{e_1, \ldots, e_{m + 1}}\).
  Next consider the instance \(I_0 = (U'', \cS'')\) with
  \[
    U'' \coloneqq \set{e_1, \ldots, e_{m + 1}} \qquad \cS'' \coloneqq \set{\set{e_i}}[i\in[m]] \sqcup \set{e_1, \ldots, e_{m + 1}}
  \]
  for which \(\set{e_1, \ldots, e_{m + 1}}\) is the optimal solution.
  Now we define instances \(I_1, \ldots, I_{\abs{U}}\) by adding the elements of \(\abs{U}\) one by one.
  Due to the introduction of the elements \(e_i\) this construction guarantees that all sets are different at all points and we have \(I_{\abs{U}} = I'\).
  Further, we have \(\OPT(I_i) \leq \OPT(I) + 1\) for all \(i\in[\abs{U}]_0\).

  This sequence of instances fulfills the conditions of \Cref{prop:eptas-implies-fpt} and thus there is no EPTAS for \setc[\(e^{+}\)] unless \(\fpt = W[2]\) by \Cref{prop:w-hard-implies-no-eptas} as \setc is \(W[2]\)-complete.
\end{proof}
\subsection{\setc[\(e^{-}\)]}
Similar to the previous section on \setc[\(e^{+}\)] we only rule out an EPTAS for \setc[\(e^{-}\)] under common assumptions.
In this case we do not have a direct application of \Cref{prop:w-hard-implies-no-eptas} but the proof structure remains similar.
The challenge in this case is to find a bigger instance where optimum does not increase arbitrarily but still has an obvious optimal solution.
To cope with this issues we instead show that a different but closely related problem is fixed-parameter tractable when an EPTAS exists.
The concrete problem is \domset in unit disk graphs which has a PTAS \cite{DBLP:conf/waoa/NiebergH05} but is still \(W[1]\)-hard \cite{DBLP:conf/iwpec/Marx06}.
In the proof we use the fact that a \domset instance can always be viewed as a \setc instance and we use the PTAS to cope with the aforementioned issue.
We start with an approximate solution to a \domset instance and add elements to make this solution an optimal solution.
Next, we remove the added elements until we reach the original instance.

\begin{lemma}\label{prop:setc:removing:noeptas}
  There is no EPTAS for \setc[\(e^{-}\)], unless \(\fpt = W[1]\).
\end{lemma}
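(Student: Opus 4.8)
The plan is to adapt the strategy used for \setc[\(e^{+}\)] but, as the paper's preamble to the lemma indicates, to route the argument through \domset on unit disk graphs rather than through \setc itself, because we cannot blow up a \setc instance by adding elements while keeping the optimum bounded \emph{and} obvious. First I would fix an arbitrary unit disk graph \(G = (V, E)\) together with its geometric representation, and view the \domset instance as a \setc instance \(I_G\) in the standard way: the universe is \(V\), and for each vertex \(v\) the set \(N[v]\) (the closed neighbourhood) is a member of \(\cS\). A dominating set of \(G\) of size \(k\) corresponds exactly to a set cover of \(I_G\) of size \(k\), so \(\OPT(I_G) = \gamma(G)\), the domination number. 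Since \domset on unit disk graphs admits a PTAS \cite{DBLP:conf/waoa/NiebergH05}, we can compute a \((1+\delta)\)-approximate dominating set \(D\) for any fixed \(\delta\); choosing \(\delta\) small (as a function of the target parameter \(k\)) forces \(D\) to be optimal whenever \(\gamma(G) \le k\). The key reduction to set up is then: start from the instance that already \emph{has} \(D\) as an obvious optimal solution, and peel elements off one at a time via the \(e^{-}\) modification until we arrive at \(I_G\) itself.

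Concretely, I would build the ``padded'' instance \(I_0\) as follows. Let \(D = \{d_1, \ldots, d_t\}\) be the PTAS solution (so \(t \le (1+\delta)\gamma(G)\), hence \(t = \bo{\gamma(G)}\) once \(\delta\) is fixed). Take \(t\) fresh private elements \(p_1, \ldots, p_t\) and attach \(p_j\) only to the set \(N[d_j]\). In \(I_0\) I would also \emph{restrict} the universe so that, apart from these private elements, only elements already ``claimed'' by \(D\) are present — the cleanest choice is to let \(I_0\) have universe \(\{p_1, \ldots, p_t\}\) (each \(p_j\) covered uniquely by the augmented set \(N[d_j] \cup \{p_j\}\), with all other sets restricted accordingly, most becoming empty or redundant). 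Then \(D\) is visibly the unique optimal solution of \(I_0\) and \(\OPT(I_0) = t\). Next I add the elements of \(V\) back, one per step, in each step putting the new vertex \(u\) into exactly those restricted sets \(N[d] \cup (\text{private tag})\) for which \(u \in N[d]\) originally — this is the \(e^{-}\) modification read backwards, so consecutive instances satisfy \(I_{i-1} \sim_{e^-} I_i\) after reversing the listing. After all \(|V|\) additions we reach the instance \(I_{|V|}\) whose universe is \(V \sqcup \{p_1,\ldots,p_t\}\) and whose sets are \(N[v] \cup \{p_j\}\) if \(v = d_j \in D\), and \(N[v]\) otherwise. In this final instance, \(D\) is still optimal and \(\OPT(I_{|V|}) = \gamma(G)\) (the private elements can only be covered by the \(N[d_j]\), which already dominate, so they cost nothing extra); moreover \(\OPT(I_i) \le t + 1 = \bo{\gamma(G)}\) for every \(i\), since \(D\) together with at most one extra set covers everything along the way. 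Thus with \(f(k) \coloneqq (1+\delta)k + 1\) (a fixed computable function for fixed \(\delta\)) all four preconditions of \Cref{prop:eptas-implies-fpt} hold: the base instance \(I_0\) is trivially solvable, the intermediate optima stay \(\le f(k)\), the steps are valid \(e^-\) modifications, and \(\OPT(I_{|V|}) \le f(k)\) combined with the structure of \(I_{|V|}\) lets us recover a dominating set of size \(\le k\) (delete the private elements from the returned cover).

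Applying \Cref{prop:eptas-implies-fpt} then yields: an EPTAS for \setc[\(e^{-}\)] gives an \fpt algorithm, parameterized by solution size, for \domset restricted to unit disk graphs. But that problem is \(W[1]\)-hard \cite{DBLP:conf/iwpec/Marx06}, so it is \fpt only if \(\fpt = W[1]\), which is the claimed conclusion — this is exactly the pattern of \Cref{prop:w-hard-implies-no-eptas}, except that \(W[1]\)-hardness is invoked for the unit disk \domset problem rather than for \setc directly.

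I expect the main obstacle to be getting the padded instance \(I_0\) and the intermediate instances \(I_i\) exactly right so that three things hold simultaneously: (a) every \(I_i\) is a \emph{feasible} \setc instance (its sets cover its universe), which forces care about which sets survive the universe restriction and about the fact that a \(e^-\) step must not destroy feasibility; (b) the optima never exceed \(\bo{\gamma(G)}\) along the whole chain, which is why the private tags must be attached only to sets in \(D\); and (c) distinctness of sets, i.e.\ that we never accidentally merge two sets of \(\cS\) when elements are stripped — the private elements handle this for the \(D\)-sets, but collisions among non-\(D\) sets with equal closed neighbourhoods must be either tolerated (they are harmless, just duplicates) or explicitly ruled out. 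A secondary but routine point is bookkeeping the dependence of \(\delta\) on \(k\): since \(\delta\) is chosen after \(k\) is fixed, \(f(k) = (1+\delta(k))k+1\) is still computable, and the PTAS's running time \(h(1/\delta)\poly(\cdot)\) folds into the \(f'(k)\) term of \Cref{prop:eptas-implies-fpt} — but one must double-check that the PTAS is applied only \emph{once} (to produce \(D\)), not at every step, so that its exponential-in-\(1/\delta\) factor does not compound.
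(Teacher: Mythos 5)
Your high-level route is the paper's: reduce from \domset on unit disk graphs (PTAS via \cite{DBLP:conf/waoa/NiebergH05}, \(W[1]\)-hardness via \cite{DBLP:conf/iwpec/Marx06}), run the PTAS once to obtain an approximate dominating set \(D\), attach private elements to the sets of \(D\) to force a known optimal solution, and feed a chain of \(e^{-}\) steps into \Cref{prop:eptas-implies-fpt}. However, your chain is oriented so that it never reaches the instance that encodes \(\gamma(G)\). For the modification \(e^{-}\), the chain \(I_0,\dots,I_n\) must proceed by \emph{removing} elements, so \(I_0\) (solved directly) has the largest universe and \(I_n\) (the instance whose optimum must certify \(\OPT(I)\le k\)) the smallest. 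In your construction the private elements \(p_1,\dots,p_t\) are present in \emph{every} instance of the chain, and it is the vertices of \(V\) that get stripped off. Since each \(p_j\) lies only in the augmented set of \(d_j\), every instance in your chain forces all \(t\) of these sets into any cover, and because \(D\) dominates \(V\) these \(t\) sets already cover whatever portion of \(V\) is present; hence \(\OPT(I_i)=t=\abs{D}\) for every \(i\), the EPTAS just returns the forced solution at each step, and no information about \(\gamma(G)\) is ever extracted. Your terminal instance is the trivial one on universe \(\{p_1,\dots,p_t\}\), the unpadded instance \(I_G\) (the only one with optimum \(\gamma(G)\)) never occurs, and ``deleting the private elements from the returned cover'' merely hands \(D\) back, which need not have size at most \(k\) even when \(\gamma(G)\le k\).

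The repair is to remove the \emph{private} elements rather than the vertices: let \(I_0\) be the full set-cover instance \((V,\{N(v)\cup\{v\}\}_{v\in V})\) augmented by one private element per set of \(D\) (so the sets of \(D\) form the forced, hence known, optimal solution), then delete the \(\abs{D}\) private elements one at a time, ending at \(I_G\) itself. Along this short chain the optimum is sandwiched between \(\gamma(G)\) and \(\abs{D}\le 2k\) (it suffices to run the PTAS once as a \(2\)-approximation and reject immediately if \(\abs{D}>2k\); there is no need to shrink \(\delta\) with \(k\)), all preconditions of \Cref{prop:eptas-implies-fpt} hold, and the terminal optimum equals \(\gamma(G)\), which is exactly what the \(W[1]\)-hardness conclusion needs. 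Your side concerns about set distinctness and feasibility are legitimate but are disposed of by assuming w.l.o.g.\ that no two vertices have equal closed neighbourhoods.
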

\begin{proof}
  Suppose we have an EPTAS for \setc[\(e^{-}\)].

  Let \((G = (V, E), k)\) be a parameterized instance of \domset in unit disk graphs.
  W.l.o.g.\ assume that there are no vertices \(u, v\in V\) with \(N(u) \cup \set{u} = N(v) \cup \set{v}\).
  We apply the PTAS \cite{DBLP:conf/waoa/NiebergH05} as a 2-approximation and obtain a solution \(S \subseteq V\).
  If \(\abs{S} > 2k\) we know that \(\OPT > k\) and there is no solution of size at most \(k\).
  Otherwise, we build a \setc Instance as follows.
  We set \(U \coloneqq V\) and \(\cS \coloneqq f[V]\) where
  \[
    f : V \to \cP(U), v \mapsto N(v) \cup \set{v}
  \]
  Note that \(f\) is an injective function and thus the solutions to the \domset and the constructed \setc instance are in a one-to-one relation that preserves the number of elements in a solution.
  Thus, it suffice to find a solution in the constructed instance with value at most \(k\) or decide that \(\OPT((U, \cS)) > k\).
  To do this we use the presumed EPTAS.

  First we add \(\abs{S}\) new elements to fix the solution \(S\) and make it the only optimal solution.
  For every \(v\in S\) we add an additional \(v'\) to the universe and add it to \(f(v)\).
  Call the resulting instance \(I'\).
  The sets corresponding to the elements from \(S\) are the optimal solution for \(I'\).
  Now we remove the added elements one-by-one until we have an solution for \((U, \cS)\) as in the proof of \Cref{prop:eptas-implies-fpt}.
  This works because the optimum in every step is bounded by \(2k\).

  Therefore, we showed that \domset in unit disk graphs is fixed-parameter tractable parameterized by solution size which implies \(\fpt = W[1]\) since the problem is \(W[1]\)-hard \cite{DBLP:conf/iwpec/Marx06}.
\end{proof}

\section{Weighted Set Cover}\label{sec:wsetc}
\subsection{\setc*[\(S^{+}\)]}
In the weighted case we can sharpen the results obtained in \Cref{prop:setc:add-set:approx} by removing the factor of two before \(f\) which allows to improve the inapproximability bound compared to \Cref{prop:setc:add-set:inapprox}.
\begin{lemma}
  Let \(\rho \geq 1\).
  Let \(f : \bN \to \bQ_{\geq 1}\).
  An \(f(\abs{U})\)-approximation algorithm for \setc*[\(S^{+}\)][\(\rho\)] implies an \(f(2\abs{U})\)-approximation algorithm for \setc.
\end{lemma}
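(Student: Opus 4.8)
The plan is to reuse the universe-duplication construction from the proof of \Cref{prop:setc:add-set:approx}, but to exploit the weight function in order to remove the additive \(+1\) in the optimum of the modified instance; that \(+1\) is exactly what produced the factor \(2\) in the unweighted bound. Concretely, given a \setc instance \((U,\cS)\) with \(\abs{U}\ge 1\), I would duplicate the universe as before — \(U' \coloneqq U \sqcup \set{u'}[u\in U]\) (so \(\abs{U'} = 2\abs{U}\)) and \(\cS' \coloneqq \cS \cup \set{\set{u,u'}}[u\in U]\), with \emph{unit} weights on all of \(\cS'\) — and let \(s' \coloneqq U'\setminus U\) be the set of all copies, with weight \(0\). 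The \setc*[\(S^{+}\)][\(\rho\)] instance is then the base instance \((U',\cS')\) with unit weights, the solution \(S^{*}\coloneqq\set{\set{u,u'}}[u\in U]\), and the modified instance \((U',\cS'\cup\set{s'})\) whose weight function is \(0\) on \(s'\) and \(1\) everywhere else.

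Three facts then have to be checked. (i) \(S^{*}\) is an optimal solution of the base instance: as in \Cref{prop:setc:add-set:approx} each copy \(u'\) lies only in \(\set{u,u'}\), so all pair sets are forced, \(S^{*}\) is the unique inclusion-minimal feasible solution, and as weights are non-negative it is optimal. (ii) Adding \(s'\) is a legal \(S^{+}\) step (the set \(s'\) consists only of copies, hence is neither an original set nor a pair set, so \(s'\notin\cS'\)), and both instances are feasible since \(\bigcup\cS' = U'\). (iii) The optimum of the modified instance equals \(\OPT((U,\cS))\): ``\(\le\)'' follows by adding the free set \(s'\) to an optimal cover of \((U,\cS)\); ``\(\ge\)'' follows because from any feasible solution \(T\) of the modified instance one can delete \(s'\) and replace each pair set \(\set{u,u'}\in T\) by an arbitrary original set containing \(u\) (which exists since \(\bigcup\cS = U\)), obtaining a cover of \((U,\cS)\) of size at most \(\val(T)\). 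The very same rewriting turns the output of the approximation algorithm into a cover of \((U,\cS)\) of no larger size.

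Assembling: run the assumed \(f(\abs{U'})\)-approximation on this reoptimization instance to get a solution \(S\) with \(\val(S)\le f(2\abs{U})\,\OPT((U,\cS))\), and rewrite it to a cover of \((U,\cS)\) of size at most \(f(2\abs{U})\,\OPT((U,\cS))\), i.e.\ an \(f(2\abs{U})\)-approximation for \setc. The only genuinely delicate point is step (iii): it is essential that the sets of \(\cS'\) touching the copies (in particular the pair sets) have weight \(1\) rather than \(0\), since that is exactly what makes ``replace one pair set by one original set'' weight-non-increasing; the slack that is freed up is instead placed on the newly added set \(s'\), whose weight \(0\) — a choice unavailable in the unweighted construction of \Cref{prop:setc:add-set:approx} — is what kills the factor \(2\).
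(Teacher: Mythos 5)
Your proposal is correct and matches the paper's approach: the paper's proof sketch is exactly the unweighted duplication construction with the added set \(U'\setminus U\) given weight \(0\) (and all other sets weight \(1\)), so that the optimum of the modified instance equals \(\OPT((U,\cS))\) and the additive \(+1\) (hence the factor \(2\)) disappears. You have simply filled in the details that the paper leaves implicit, and your verification of feasibility, optimality of the given solution, and the exchange argument for pair sets is sound.
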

\begin{proof}[Proof sketch]
  The proof is essentially the same as the proof of \Cref{prop:setc:add-set:approx} where all sets have weight 1 except for the set \(U'\setminus U\) which gets weight 0.
  Thus, the optimum of the original instance and reoptimization instance is the same yielding the sharper result.
\end{proof}

\begin{corollary}\label{prop:wsetc:add-set:inapprox}
  For any \(\epsilon \in (0, 1)\) and \(\rho \geq 1\) there is no \((1 - \epsilon)\ln\abs{U}\)-approximation algorithm for \setc*[\(S^+\)][\(\rho\)] unless \(\PP = \NP\).
\end{corollary}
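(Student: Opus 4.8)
The plan is to obtain \Cref{prop:wsetc:add-set:inapprox} from the lemma just proved together with the inapproximability of \setc due to \textcite{DBLP:conf/stoc/DinurS14}, exactly in the spirit of \Cref{prop:setc:add-set:inapprox} but now exploiting that the factor two in front of $f$ has disappeared in the weighted setting. First I would argue by contraposition: suppose that for some fixed $\epsilon\in(0,1)$ and $\rho\ge 1$ there is a polynomial-time $(1-\epsilon)\ln\abs{U}$-approximation algorithm for \setc*[\(S^{+}\)][\(\rho\)]. Taking for $f$ a rational upper bound on $x\mapsto\max\{1,(1-\epsilon)\ln x\}$ (as in the discussion preceding \Cref{prop:setc:add-set:inapprox}) and applying the weighted lemma then yields a polynomial-time $(1-\epsilon)\ln(2\abs{U})$-approximation algorithm for \setc.

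Next I would absorb the additive $(1-\epsilon)\ln 2$ into the leading factor. Writing $(1-\epsilon)\ln(2\abs{U}) = (1-\epsilon)\ln\abs{U} + (1-\epsilon)\ln 2$ and setting $\epsilon'\coloneqq \epsilon/2\in(0,1)$, one checks that $(1-\epsilon)\ln(2\abs{U}) \le (1-\epsilon')\ln\abs{U}$ holds as soon as $(\epsilon/2)\ln\abs{U}\ge (1-\epsilon)\ln 2$, i.e.\ for all $\abs{U}\ge N_\epsilon\coloneqq 2^{2(1-\epsilon)/\epsilon}$, a constant depending only on $\epsilon$. For the remaining instances with $\abs{U} < N_\epsilon$ I would simply solve \setc optimally: since \setc is fixed-parameter tractable parameterized by $\abs{U}$ and $N_\epsilon$ is a constant, an optimal solution is computable in polynomial time in this regime. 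Running the reduction above when $\abs{U}\ge N_\epsilon$ and the exact algorithm otherwise gives a polynomial-time $(1-\epsilon')\ln\abs{U}$-approximation for \setc, contradicting \cite{DBLP:conf/stoc/DinurS14} unless $\PP=\NP$; this proves the corollary.

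This is really a routine deduction, so there is no substantial obstacle; the only bookkeeping point that needs a moment of care is the case split on $\abs{U}$, namely verifying that the threshold $N_\epsilon$ depends on the fixed constant $\epsilon$ alone, so that ``small universe'' genuinely means ``universe of constant size'' and the fallback stays polynomial. The algebra swallowing $\ln 2$ into the approximation factor, and the observation that $\epsilon'=\epsilon/2$ lies in $(0,1)$, are immediate. It is worth noting that this case split is also precisely where the weighted construction pays off: in the unweighted \Cref{prop:setc:add-set:inapprox} the surviving factor two forces the further restriction $\alpha<\tfrac12$, whereas here any $\epsilon\in(0,1)$ works.
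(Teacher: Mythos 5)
Your proposal is correct and follows essentially the same route as the paper: apply the weighted $S^{+}$ lemma to turn a presumed $(1-\epsilon)\ln\abs{U}$-approximation into a $(1-\epsilon)\ln(2\abs{U})$-approximation for \setc, absorb the additive $(1-\epsilon)\ln 2$ by passing to $\epsilon'=\epsilon/2$ for all $\abs{U}$ above a constant threshold depending only on $\epsilon$, and invoke \cite{DBLP:conf/stoc/DinurS14}. Your explicit handling of the constant-size regime and of the rationality of $f$ is slightly more careful than the paper's one-line sketch, but it is the same argument.
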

\begin{proof}[Proof sketch]
  Same arguments as used for \Cref{prop:setc:add-set:inapprox}.
\end{proof}

\subsection{\setc*[\(S^{-}\)]}
We obtain the following result as a straightforward corollary from \Cref{prop:setc:rm-set:approx}.
\begin{corollary}
  Let \(\rho \geq 1\).
  Let \(f : \bN \to \bQ_{\geq 1}\).
  An \(f(\abs{U})\)-approximation algorithm for \setc*[\(S^{-}\)][\(\rho\)] implies an \(f(\abs{U})\)-approximation algorithm for \setc.
\end{corollary}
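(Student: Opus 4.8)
The plan is to avoid any new construction and simply exploit that the unweighted reoptimization problem is a special case of the weighted one. Formally, \setc[\(S^{-}\)][\(\rho\)] is exactly \setc*[\(S^{-}\)][\(\rho\)] restricted to instances whose weight function is the constant one-function: the objective \(\val\), the optimum \(\OPT\), the feasibility requirement \(\bigcup\cS = U\), and the modification \(S^{-}\) (which does not touch the weights of the surviving sets) all coincide on such instances, and the side conditions \(S\in\sol(I)\) and \(\val(S)\le\rho\OPT(I)\) that define a reoptimization instance are unaffected by assigning every set weight \(1\).

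Given this, the proof is two steps. First, take the presumed \(f(\abs{U})\)-approximation algorithm \(\cA\) for \setc*[\(S^{-}\)][\(\rho\)]; by the previous paragraph, feeding \(\cA\) an instance of \setc[\(S^{-}\)][\(\rho\)] (read as a weighted instance with unit weights) makes it an \(f(\abs{U})\)-approximation algorithm for \setc[\(S^{-}\)][\(\rho\)] as well, with the same guarantee since \(\abs{U}\) is not changed by this reinterpretation. Second, apply \Cref{prop:setc:rm-set:approx} to \(\cA\), which turns an \(f(\abs{U})\)-approximation algorithm for \setc[\(S^{-}\)][\(\rho\)] into an \(f(\abs{U})\)-approximation algorithm for \setc. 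Chaining the two gives the claim.

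I do not expect a genuine obstacle here: the only point that needs a sentence of justification is that ``special case'' must be read at the level of reoptimization instances (triples \((I,S,I')\)), not merely at the level of base \setc instances, and that check is immediate from the definitions in \Cref{sec:prelims}. As an alternative one could re-run the gadget from the proof of \Cref{prop:setc:rm-set:approx} directly in the weighted setting --- add the set \(U\) itself with weight \(0\), which is a weight-\(0\) and hence \(\rho\)-approximate optimal solution whose removal re-exposes the original weighted instance, yielding an \(f(\abs{U})\)-approximation for \setc* and hence for \setc --- but this still uses the same special-case observation at some point, so routing through it from the start is the shortest argument and is precisely why the statement is a straightforward corollary of \Cref{prop:setc:rm-set:approx}.
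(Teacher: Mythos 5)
Your proposal is correct and matches what the paper intends: the paper gives no separate proof and simply declares the result a straightforward corollary of \Cref{prop:setc:rm-set:approx}, which is exactly your observation that unweighted \(S^{-}\)-reoptimization instances are weighted ones with unit weights, so the weighted algorithm specializes and the lemma applies. Your alternative direct gadget (adding the full universe as a weight-\(0\) set) is also valid and would even give the slightly stronger conclusion of an \(f(\abs{U})\)-approximation for \setc*, but it is not needed for the stated claim.
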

\begin{corollary}\label{prop:wsetc:rm-set:inapprox}
  For any \(\epsilon\) and \(\rho \geq 1\) there is no \((1-\epsilon)\ln\abs{U}\)-approximation algorithm for \setc*[\(S^-\)][\(\rho\)] unless \(\PP = \NP\).
\end{corollary}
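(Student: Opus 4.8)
The plan is to read the claim off the preceding corollary together with the inapproximability of \setc due to \textcite{DBLP:conf/stoc/DinurS14}, in exact parallel to the way \Cref{prop:setc:rm-set:inapprox} is derived from \Cref{prop:setc:rm-set:approx}. So the work is purely assembling two earlier results.

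First I would fix $\epsilon\in(0,1)$; this is the only interesting regime, since for $\epsilon\le 0$ the target ratio is at least $\ln\abs{U}$ and is met by the greedy algorithm on all sufficiently large universes, while for $\epsilon\ge 1$ it is meaningless — so the statement should be read with this restriction, just as in \Cref{prop:wsetc:add-set:inapprox}. Define $f:\bN\to\bQ_{\ge 1}$ by $f(x)=\max\{1,\lceil(1-\epsilon)\ln x\rceil\}$; rounding up to an integer (hence rational) value only weakens the hypothesised algorithm, so nothing is lost. Now suppose, for contradiction, that some polynomial-time algorithm $(1-\epsilon)\ln\abs{U}$-approximates \setc*[\(S^{-}\)][\(\rho\)]; in particular it is an $f(\abs{U})$-approximation algorithm for that problem.

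Next I would apply the preceding corollary verbatim: such an algorithm yields a polynomial-time $f(\abs{U})$-approximation — hence a $(1-\epsilon)\ln\abs{U}$-approximation — algorithm for (unweighted) \setc. By \textcite{DBLP:conf/stoc/DinurS14}, the existence of such an algorithm forces $\PP=\NP$, which is the desired contradiction. An equivalent route, bypassing even the preceding corollary, is to note that restricting any algorithm for \setc*[\(S^{-}\)][\(\rho\)] to unit-weight instances gives an algorithm for \setc[\(S^{-}\)][\(\rho\)], and then to invoke \Cref{prop:setc:rm-set:approx} directly.

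I expect essentially no obstacle here: all the content sits in \Cref{prop:setc:rm-set:approx} (and its weighted analogue, the preceding corollary) and in the cited hardness theorem. The only point deserving a sentence of care is that the reduction underlying \Cref{prop:setc:rm-set:approx} leaves $\abs{U}$ unchanged — it merely adjoins one set equal to the whole universe and deletes it again — so the approximation factor is transported with no degradation and the quantifier over $\epsilon$ in \textcite{DBLP:conf/stoc/DinurS14} carries over unchanged, which is exactly why the bound matches that for \setc itself.
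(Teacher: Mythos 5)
Your proposal is correct and follows exactly the route the paper intends: the corollary is stated without proof as the immediate combination of the preceding corollary (the weighted analogue of \Cref{prop:setc:rm-set:approx}, which preserves \(\abs{U}\) and the approximation factor) with the \((1-\epsilon)\ln\abs{U}\) inapproximability of \setc due to \textcite{DBLP:conf/stoc/DinurS14}. The minor care you take about the range of \(\epsilon\) and the rationality of \(f\) is fine and does not change the argument.
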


\subsection{\setc*[\(e^{+}\)]}
\setc*[\(e^{+}\)][\(\rho\)] can be approximated in polynomial with a ratio of \(1 + \rho\).
This follows from Conclusion 3 in \cite{DBLP:phd/basesearch/Zych12} where they showed this result for the addition of a constant number of elements.
% \begin{lemma}
%   \setc*[\(e^{+}\)][\(\rho\)] can be approximated within a factor of \(\rho + 1\) in polynomial time.
% \end{lemma}
% \begin{proof}
%   We find the lowest weight set containing the new element and add it to the old solution.
%   The old solution has at most weight equal to the old optimum times \(\rho\) and the weight of the newly added set has be to smaller than the new optimum.
%   Since the new optimum has to be at least as large as the old optimum, this yields the claimed approximation.
% \end{proof}

We show that an approximation algorithm for \setc*[\(e^{+}\)] with approximation ratio below \(\frac{3}{2}\) implies a constant factor approximation for \setc.
The main idea is to add an extra set that covers the entire universe and is expensive but still an optimal solution.
Then, we add an element to the instance but not to the expensive set.
We show that any solution taking the expensive set weighs at least \(1.5\) times the optimum and thus an approximation algorithm for \setc*[\(e^{+}\)] with ration smaller \(1.5\) cannot choose this set.
Furthermore, for the algorithm to achieve the approximation ratio of \(1.5\), it has to find an approximate solution to the original \setc instance within a ratio of \(2\).
\begin{lemma}\label{prop:wsetc:adding:inapprox}
  \setc*[\(e^{+}\)] cannot be approximated within a factor smaller than \(1.5\) in polynomial time, unless \(\PP=\NP\).
\end{lemma}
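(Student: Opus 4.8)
Here is a plan for proving \Cref{prop:wsetc:adding:inapprox}.

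\begin{proof}[Proof sketch]
The plan is to reduce from (unweighted) \setc, which has no polynomial-time constant-factor approximation unless \(\PP=\NP\) \cite{DBLP:conf/stoc/DinurS14} (a constant-factor approximation is in particular an \((1-\epsilon)\ln\abs{U}\)-approximation on large enough instances). So fix a constant \(c<\tfrac32\), suppose a polynomial-time \(c\)-approximation \(\cA\) for \setc*[\(e^{+}\)] exists, and turn it into a polynomial-time \((2c-1)\)-approximation for \setc; since \(2c-1<2\) is a constant, that is the desired contradiction.

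Given a \setc instance \((V,\mathcal T)\) with \(\ell\coloneqq\OPT((V,\mathcal T))\) (assume \(\ell\) known for the moment), I build the reoptimization instance \((I_1,\set{s^{*}},I_2)\) as follows. Take \(\ell\) fresh ``dummy'' elements \(d_1,\dots,d_\ell\), and let \(I_1\) be the weighted instance with universe \(V\cup\set{d_1,\dots,d_\ell}\) and set system \(\mathcal T\cup\set{t,s^{*}}\), where \(t\coloneqq\set{d_1,\dots,d_\ell}\) and \(s^{*}\coloneqq V\cup\set{d_1,\dots,d_\ell}\); every set of \(\mathcal T\) has weight \(1\), \(t\) has weight \(\ell\), and \(s^{*}\) has the large weight \(2\ell\). (All these sets are distinct and they cover the universe, so \(I_1\) is a legal instance; small \(\ell\) is handled directly.) The point is that \(\set{s^{*}}\) is an optimal solution of \(I_1\): a cover either contains \(s^{*}\), costing \(\ge 2\ell\), or it must cover \(d_1,\dots,d_\ell\) using \(t\) (cost \(\ell\)) and, with a disjoint subfamily of \(\mathcal T\), cover \(V\) (cost \(\ge\OPT((V,\mathcal T))=\ell\)), again \(\ge 2\ell\); hence \(\OPT(I_1)=2\ell\), attained by \(\set{s^{*}}\).

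Now obtain \(I_2\) by adding a fresh element \(e\) and putting it into the placeholder set only, i.e.\ \(S_e=\set{t}\). In \(I_2\) the unique set covering \(e\) is \(t\cup\set{e}\), which still has weight \(\ell\) and, beyond \(e\), covers all dummies; so every cover of \(I_2\) contains it, and after paying \(\ell\) for it only \(V\) remains — at cost \(\ell\) via \(\mathcal T\) or \(2\ell\) via \(s^{*}\). Thus \(\OPT(I_2)=2\ell\), while every solution using \(s^{*}\) contains both \(t\cup\set{e}\) and \(s^{*}\) and therefore costs \(\ge 3\ell=\tfrac32\OPT(I_2)\). Running \(\cA\) on \((I_1,\set{s^{*}},I_2)\) yields a cover of value \(\le 2c\ell<3\ell\), which hence avoids \(s^{*}\); it must therefore equal \(\set{t\cup\set{e}}\cup\mathcal C\) for some \(\mathcal C\subseteq\mathcal T\) covering \(V\), and \(\ell+\abs{\mathcal C}\le 2c\ell\) gives \(\abs{\mathcal C}\le(2c-1)\ell=(2c-1)\OPT((V,\mathcal T))\) — the claimed \((2c-1)\)-approximation for \setc.

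Finally, the assumption that \(\ell\) is known is removed in the usual way: run the construction for every candidate \(\lambda\in\set{1,\dots,\abs V}\), discard the runs whose output is not a cover of \(V\) (these include the ones where \(\set{s^{*}}\) fails to be optimal in \(I_1\), where \(\cA\) has no obligation), and return the smallest remaining cover; the run \(\lambda=\ell\) certifies that this is below \(2\ell\), and the other runs can only help. I expect the delicate point to be the weight calibration: \(\set{s^{*}}\) must be \emph{exactly} optimal in \(I_1\) (which is why there are exactly \(\ell\) dummies and the weights \(\ell\) and \(2\ell\) must be balanced just so), while after the modification every \(s^{*}\)-using solution is pushed to \emph{exactly} \(\tfrac32\) times the new optimum; this is what simultaneously fixes the threshold at \(\tfrac32\) and produces the factor-\(2\) loss, and it requires carefully verifying that \(\OPT(I_2)\) is attained only by solutions avoiding \(s^{*}\).
\end{proof}
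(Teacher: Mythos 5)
Your proof is correct and takes essentially the same route as the paper: an all-covering set priced at \(2\OPT\) serves as the given optimal solution, the new element is placed only into a set of weight \(\OPT\) that every solution of the modified instance must take, and the resulting \(3\OPT\)-versus-\(2\OPT\) gap yields both the \(3/2\) threshold and a constant-factor approximation for \setc, with \(\OPT\) handled by exhaustive guessing. Your gadget is in fact slightly simpler: by putting the \(\ell\) dummy elements only into \(t\) (rather than one fresh element per set, collected in the paper's set \(R\)), you avoid the paper's singleton-padding preprocessing, which it needs only to make the ``cover every new element by its own set'' alternative cost at least \(2\OPT\).
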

\begin{proof}
  Assume we have an approximation algorithm for \setc*[\(e^{+}\)] with approximation ratio below \(\frac{3}{2}\).

  Let \(I = (U, \cS)\) be an instance of \setc.
  W.l.o.g.\ assume that there are no singleton sets.
  If there is an element of the universe only covered by a singleton set we know that this singleton set is always in the solution and we can remove it and the corresponding element from the instance.
  Adding them back later to the instance and the singleton set to a solution only improves the approximation guarantee.
  If there are other singleton sets we may remove them because any solution containing singleton sets can easily be modified to only contain non-singleton sets and not increasing the size of the solution.

  Next we add a singleton set for all elements and obtain the instance \(I' = (U, \cS')\) such that \(\OPT(I') = \OPT(I)\) and \(\abs{\cS'} \geq 2\OPT(I)\) as there is an optimal solution with no singleton sets and \(\abs{U} \geq \OPT(I)\).
  This will be important for the analysis later on.

  We may assume that we know \(\OPT(I')\) because we do the following construction for each value in \([\abs{U}]\) and output the best valid solution we obtain.
  We only analyze the construction for the right guess yielding an upper bound on the output of the constructed algorithm.

  We build a \setc* instance \(I''\) as follows.
  For every \(s\in\cS'\) we introduce a new element \(e_s\) to the universe.
  Call the new universe \(U'' = U \cup\set{e_S}[S\in\cS']\).
  The sets are
  \[
    \cS'' \coloneqq \set{s\cup\set{e_s}}[s\in\cS'] \cup \set{G, R}
  \]
  where \(G \coloneqq U''\) and \(R\coloneqq \set{e_s}[s\in\cS']\) with weight function
  \[
    w : \cS'' \to \bQ_{\geq 0}, s \mapsto
    \begin{cases*}
      2\OPT((U, \cS')) & if \(s = G\) \\
      \OPT((U, \cS')) & if \(s = R\) \\
      1 & otherwise.
    \end{cases*}
  \]
  The construction is depicted in \Cref{fig:example:add-element:approx}.
  There are three candidates for an optimal solution of \(I'' = (U'', \cS'', w)\):
  First, just the set \(G\) with weight \(2\OPT((U, \cS'))\), then the set \(R\) and an optimal solution for the instance \((U, \cS')\) with weight \(2\OPT((U, \cS'))\), and all sets except for \(G\) and \(R\) with weight at least \(2\OPT((U, \cS'))\) due to the introduction of singleton sets above.
  Thus, \(G\) is an optimal solution.

  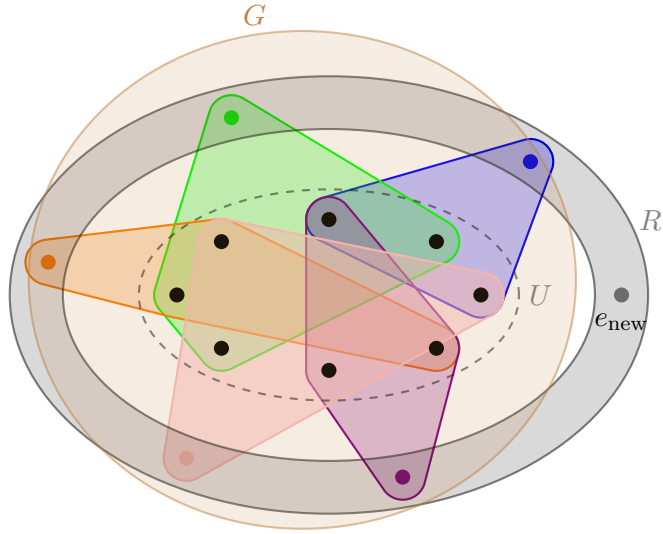
\begin{figure}[tbh]
    \centering
    \begin{tikzpicture}
      \foreach \r in {0,45,...,315} {
        \coordinate (c\r) at (\r:2 and 1);
        \coordinate (cc\r) at (\r:3.75 and 2.5);

        % \draw[double, double distance=12pt, thick, line cap=round] (c\r) -- (cc\r);

        \node[fill, circle, inner sep=2pt] (\r) at (c\r) {};
        % \node[fill, circle, inner sep=2pt] (\r') at (cc\r) {};
      }
      \draw[dashed, thick, opacity=.5] ellipse (2.5 and 1.4);
      \node[right, opacity=.5] at (0:2.5 and 1.4) {\(U\)};

      \begin{scope}[on background layer]
        \node[fill, circle, inner sep=2pt, blue] (blue) at (45:3.75 and 2.5) {};
        \fill[blue, opacity=.25, on background layer] \convexpath{90,blue,0}{.3cm};
        \draw[blue, thick] \convexpath{90,blue,0}{.3cm};

        \node[fill, circle, inner sep=2pt, green] (green) at (110:3.75 and 2.5) {};
        \fill[green, opacity=.25, on background layer] \convexpath{225,180,green,45}{.3cm};
        \draw[green, thick] \convexpath{225,180,green,45}{.3cm};

        \node[fill, circle, inner sep=2pt, orange] (orange) at (170:3.75 and 2.5) {};
        \fill[orange, opacity=.25, on background layer] \convexpath{315,180,orange,135}{.3cm};
        \draw[orange, thick] \convexpath{315,180,orange,135}{.3cm};

        \node[fill, circle, inner sep=2pt, violet] (violet) at (285:3.75 and 2.5) {};
        \fill[violet, opacity=.25, on background layer] \convexpath{315,violet,270,90}{.3cm};
        \draw[violet, thick] \convexpath{315,violet,270,90}{.3cm};

        \node[fill, circle, inner sep=2pt, pink] (pink) at (240:3.75 and 2.5) {};
        \fill[pink, opacity=.5, on background layer] \convexpath{pink,135,0}{.3cm};
        \draw[pink, thick] \convexpath{pink,135,0}{.3cm};
      \end{scope}

      \fill[opacity=.15, even odd rule] ellipse (3.5 and 2.2) ellipse (4.2 and 2.9);
      \draw[thick, opacity=.5] ellipse (3.5 and 2.2);
      \draw[thick, opacity=.5] ellipse (4.2 and 2.9);
      \node[right, opacity=.5] at (20:4.2 and 2.9) {\(R\)};
      \node[opacity=.5, inner sep=2pt, fill, circle, label={-90:\(e_{\mathrm{new}}\)}] at (0:3.85 and 2.55) {};

      \fill[opacity=.15, brown] (-.35, .2) ellipse (3.6 and 3.3);
      \draw[thick, opacity=.5, brown] (-.35, .2) ellipse (3.6 and 3.3);
      \node[xshift=-.35cm, yshift=.2cm, brown, above] at (100:3.6 and 3.3) {\(G\)};
    \end{tikzpicture}
    \caption{Construction in \Cref{prop:wsetc:adding:inapprox}.  The black dots represent the original elements of the \setc instance.  The colorful sets are the sets of the original instance but each extended with a correspondingly colored element.  The extra sets \(G\) and \(R\) are the brown and gray sets respectfully.  The element that is added as the local modification is labeled \(e_{\mathrm{new}}\).}
    \label{fig:example:add-element:approx}
  \end{figure}

  The local modification is the addition of new element \(e_{\mathrm{new}}\) that is only added to the set \(R\).
  Call this instance \(I^{*}\).
  We apply our presumed algorithm on the instance \((I'', \set{G}, I^{*})\) and obtain a solution \(S\).
  Every solution for \(I^{*}\) has to include the modified version of \(R\) which means all elements \(e_s\) are covered.
  By combining an optimal solution for \((U, \cS')\) and the modified version of \(R\) we obtain an optimal solution with weight \(2\OPT((U, \cS'))\).
  Since every solution has to contain the set \(R\), a solution containing \(G\) has value at least \(3\OPT((U, \cS'))\).
  Therefore, the solution \(S\) does not contain the set \(G\).
  The number of sets from \(\cS'\) in the solution \(S\) is
  \begin{align*}
    \val(S) - w(R)
    &< \frac{3}{2}\OPT(I^{*}) - w(R)\\
    &= 3\OPT((U, \cS')) - \OPT((U, \cS'))\\
    &= 2\OPT((U, \cS')).
  \end{align*}
  Thus, we have constructed a 2-approximation for \setc.
\end{proof}
\begin{remark}
  This result can probably be sharpened for \setc*[\(e^{+}\)][\(\rho\)] for \(\rho > 1\) by adjusting the values for the sets \(R\) and \(G\).

  For \(\rho = 1\) it is possible to increase the bound when we have a constant \(c > 2\) such that \(\abs{\cS'} \geq c\OPT((U, \cS'))\) by adjusting the weights to \(w(G) = c\OPT((U, \cS'))\) and \(w(R) = (c - 1)\OPT((U, \cS'))\).
\end{remark}
\subsection{\setc*[\(e^{-}\)]}
\begin{lemma}\label{prop:wsetc:removing:approx}
  Let \(\rho \geq 1\).
  Let \(f : \bN \to \bQ_{\geq 1}\).
  An \(f(\abs{U})\)-approximation algorithm for \setc*[\(e^{-}\)][\(\rho\)] yields an \(f(\abs{U})\)-approximation algorithm for \setc*.
\end{lemma}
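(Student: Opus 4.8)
The plan is to reuse the reduction pattern of the earlier lemmas: from an arbitrary \setc* instance \((U,\cS,w)\) we build a \setc*[\(e^{-}\)][\(\rho\)] instance whose initial state has a \emph{forced}, hence obvious, optimal solution, while the modified state is, up to one harmless extra set, the instance \((U,\cS,w)\) itself. Concretely, I would set \(W \coloneqq \sum_{s\in\cS}w(s)\), pick a fresh element \(e\notin U\), and take as initial instance \(I \coloneqq (U\cup\set{e},\, \cS\cup\set{s^{*}},\, w_{I})\) with \(s^{*}\coloneqq U\cup\set{e}\), \(w_{I}(s^{*})\coloneqq W\), and \(w_{I}(s)\coloneqq w(s)\) for \(s\in\cS\). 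Since \(e\) belongs to no set of \(\cS\), every feasible cover of \(I\) must use \(s^{*}\), and \(\set{s^{*}}\) by itself already covers \(U\cup\set{e}\); hence \(\set{s^{*}}\) is an optimal solution of \(I\) with \(\OPT(I)=W\), and being optimal it is a \(\rho\)-approximate solution for every \(\rho\ge 1\), so the triple we are about to build is a legal input for the reoptimization problem.

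Next I would apply the modification \(e^{-}\) deleting \(e\). Unwinding the definition, the modified instance is \(I' = (U,\, \cS',\, w')\) with \(\cS' = \cS\cup\set{U}\) --- the new set \(U = s^{*}\setminus\set{e}\) simply coinciding with an already-present copy if \(U\in\cS\) --- while the weight rule of \(e^{-}\) gives \(w'(s)=w(s)\) for \(s\in\cS\) and a genuinely new set \(U\) inherits \(w'(U)=w_{I}(s^{*})=W\). The point to check is then \(\OPT(I') = \OPT((U,\cS))\): every cover of \((U,\cS)\) is a cover of \(I'\), and the only additional candidate, \(\set{U}\), has weight \(W=\sum_{s\in\cS}w(s)\ge\OPT((U,\cS))\) and therefore never helps.

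Feeding \((I,\set{s^{*}},I')\) to the presumed \(f\)-approximation algorithm for \setc*[\(e^{-}\)][\(\rho\)] then returns a cover \(S\) of \(I'\) of value at most \(f(\abs{U})\OPT(I') = f(\abs{U})\OPT((U,\cS))\); note that the universe of \(I'\) has size exactly \(\abs{U}\), which is why the ratio carries over unchanged. To finish, I would turn \(S\) into a solution of \((U,\cS)\): if \(S\) contains the auxiliary set \(U\), replace it by all of \(\cS\); since \(w'(U)=W\ge\sum_{s\in\cS\setminus S}w(s)\) this does not increase the value, and what remains is a cover of \(U\) using only sets of \(\cS\) of value at most that of \(S\), i.e.\ an \(f(\abs{U})\)-approximation for \((U,\cS)\).

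The only place that needs genuine care --- and the reason the naive attempt fails --- is that \(e^{-}\) preserves set weights: one cannot simply add a cheap universal set and then delete a fresh element, because after the deletion that cheap universal set is still present and the instance stays trivial. The remedy is to make \(e\) coverable \emph{only} by \(s^{*}=U\cup\set{e}\), which forces \(s^{*}\) into every solution of \(I\), and to give \(s^{*}\) weight exactly the trivial upper bound \(W\), so that the leftover set \(U\) after the deletion is useless. Verifying that \(e^{-}\) indeed assigns weight \(W\) to this leftover set --- including the boundary case \(U\in\cS\), where the leftover merely coincides with an existing set and retains its original weight, which is still \(\ge\OPT((U,\cS))\) --- is the one piece of bookkeeping to do carefully; the rest is the routine small-versus-large case analysis.
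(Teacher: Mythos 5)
Your construction is essentially the paper's: both add a fresh element covered only by a new universal set of large weight, so that this set is a forced (and hence obvious) optimum of the initial instance, and then delete the fresh element so that the modified instance is the original one plus a useless leftover copy of \(U\). The one substantive difference is the choice of that large weight. The paper takes \(W = (f(\abs{U})+1)\sum_{S\in\cS}w(S)\), so that the leftover set is strictly too heavy to occur in any \(f(\abs{U})\)-approximate solution of the modified instance; you take \(W=\sum_{s\in\cS}w(s)\) and instead post-process the returned cover, swapping the leftover set for all of \(\cS\) without increasing the value. Both are correct, but your variant is slightly cleaner: it sidesteps the caveat the paper itself raises after \Cref{prop:wsetc:removing:inapprox}, namely that the reduction must be able to evaluate (or at least approximate) \(f(\abs{U})\) in polynomial time, since with your weights no evaluation of \(f\) is needed. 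Your explicit treatment of the boundary case \(U\in\cS\), where the leftover set merges with an existing set under the definition of the \(e^{-}\) modification, is also correct and is a detail the paper glosses over.
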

\begin{proof}
  Given an instance \((U, \cS, w)\) of \setc* we construct an \setc*[\(e^{-}\)] instance as follows.
  Let \(u'\notin U\) be a fresh element.
  The universe for the new instance is \(U' \coloneqq U \cup \set{u'}\), we keep the sets from \(\cS\) and single new set which contains all elements of \(U'\) with weight \(W \coloneqq (f(\abs{U}) + 1)\sum_{S\in\cS}w(S)\).
  The newly added set is an optimal solution as it contains all elements from \(U\) and is the only sets that contains \(u'\).
  Now using an \(f(\abs{U})\)-approximation algorithm on the constructed \setc*[\(e^{-}\)][\(\rho\)] instance solves the original instance with approximation factor \(f(\abs{U})\) since \(W > f(\abs{U})\OPT((U, \cS, w))\).
\end{proof}
\begin{corollary}\label{prop:wsetc:removing:inapprox}
  For any \(\epsilon\) and \(\rho \geq 1\) there is no \((1-\epsilon)\ln\abs{U}\)-approximation algorithm for \setc*[\(e^-\)][\(\rho\)] unless \(\PP = \NP\).
\end{corollary}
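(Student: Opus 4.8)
The plan is to derive this corollary from \Cref{prop:wsetc:removing:approx} in exactly the way \Cref{prop:setc:rm-set:inapprox} is derived from \Cref{prop:setc:rm-set:approx}, namely by pushing the \setc inapproximability of \textcite{DBLP:conf/stoc/DinurS14} through the reduction. The one preliminary observation is that \setc is literally the special case of \setc* in which \(w\) is the constant one-function, so any \(c(\abs{U})\)-approximation algorithm for \setc* restricts to a \(c(\abs{U})\)-approximation for \setc; hence, unless \(\PP=\NP\), there is no polynomial-time \((1-\epsilon')\ln\abs{U}\)-approximation for \setc* either, for any \(\epsilon'>0\).

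With that in hand, I would fix \(\epsilon\); for \(\epsilon\ge 1\) the ratio \((1-\epsilon)\ln\abs{U}\) is non-positive, so there is nothing to prove, and I may assume \(\epsilon\in(0,1)\). Suppose for contradiction that some polynomial-time algorithm \(\cA\) approximates \setc*[\(e^-\)][\(\rho\)] within \((1-\epsilon)\ln\abs{U}\). First I would pick \(\epsilon'\in(0,\epsilon)\) and pass to the rational surrogate \(f(x)\coloneqq\lceil(1-\epsilon)\ln x\rceil+1\), which is a computable function \(\bN\to\bQ_{\ge 1}\) (note \(\abs{U}\ge 1\)), satisfies \((1-\epsilon)\ln x\le f(x)\), and has \(f(x)\le(1-\epsilon')\ln x\) for all sufficiently large \(x\). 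Then \(\cA\) is in particular an \(f(\abs{U})\)-approximation algorithm for \setc*[\(e^-\)][\(\rho\)], so \Cref{prop:wsetc:removing:approx} converts it into a polynomial-time \(f(\abs{U})\)-approximation algorithm for \setc*. Combining this with exact brute-force enumeration on the finitely many instances whose universe is below the threshold yields a polynomial-time \((1-\epsilon')\ln\abs{U}\)-approximation for \setc*, contradicting the bound from the previous paragraph; hence \(\PP=\NP\), as required.

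I do not expect any genuine obstacle: the entire content sits in \Cref{prop:wsetc:removing:approx}, whose reduction is polynomial and enlarges the universe by only one element (so the shift from \(\abs{U}\) to \(\abs{U}+1\) inside \(f\) is immaterial) and inflates weights only by the factor \(f(\abs{U})+1=\bo{\log\abs{U}}\). The one technicality is the passage from the threshold \((1-\epsilon)\ln x\) to the rational, slightly loosened, computable function \(f\) required by the signature of \Cref{prop:wsetc:removing:approx}; this is the same routine move already used implicitly behind \Cref{prop:setc:rm-set:inapprox} and the other \(\PP\ne\NP\) corollaries, and it is the only point that needs any care at all.
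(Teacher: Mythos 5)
Your proposal is correct and follows exactly the route the paper intends: the corollary is stated without proof precisely because it is meant to follow from \Cref{prop:wsetc:removing:approx} combined with the \textcite{DBLP:conf/stoc/DinurS14} lower bound for \setc (which transfers to \setc* since the unweighted problem is a special case). The only additions in your write-up are the routine technicalities (rationalizing the threshold function and handling small universes), which the paper glosses over but which do not change the argument.
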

Note that we construct an algorithm in the proof of \Cref{prop:wsetc:removing:approx} that evaluates \(f(\abs{U})\) or at least needs to calculate an approximation of \(f(\abs{U})\).
If this is not possible in polynomial time we do not construct a polynomial time algorithm.
This is not a problem for \Cref{prop:wsetc:removing:inapprox} since we only need to calculate the natural logarithm.

\section{Conclusion}\label{sec:conclusion}
For all the reoptimization settings we discussed we essentially have matching approximation upper and lower bounds, except for \setc*[\(e^+\)][\(\rho\)] where have an upper bound of \(1 + \rho\) but only a lower bound of \(1.5\).
It is an interesting question what the best possible approximation ratio is for this problem.
Since it seems promising to improve in the construction in \Cref{prop:wsetc:adding:inapprox} it is probably not that close to \(1.5\).

Other perspectives to consider include bounded set sizes and bounded element frequency which have better approximation guarantees but still strong hardness results.
When the set sizes are bounded by a constant we obtain an EPTAS in cases we considered for \setc as discussed in \Cref{sec:related-work}.
It would be interesting to consider this restriction in the weighted case since most of our hardness construction use sets that are as big as or almost as big as \(U\).
For bounded set sizes as well as bounded frequencies the results in \cite{DBLP:conf/waoa/BiloWZ08,DBLP:phd/basesearch/Zych12} yield better approximation guarantees compared to the general case.

\printbibliography

\end{document}